\newtheorem{theorem}{\bf Theorem}
\newtheorem{proposition}{\bf Proposition}
\newcommand{\textchange}[1]{{\color{black}{#1}}}
\renewcommand{\arraystretch}{1.1}
\newcolumntype{C}[1]{>{\centering\arraybackslash}m{#1}}
\newcommand{\figuresize}{2.2in}
\def\notvartualgraph{0}
	\def\barr#1{{{#1}}}
	\def\xhatzero{x}
	\def\barr#1{{\bar{#1}}}
	\def\xhatzero{x^0}
\begin{document}

\title{Optimal Network Slicing for Service-Oriented Networks with Flexible Routing and Guaranteed E2E Latency\thanks{Part of this work \cite{Chen2020} has been presented at the 21st IEEE International Workshop on Signal Processing Advances in Wireless Communications (SPAWC), Atlanta, Georgia, USA, May 26--29, 2020.}}

\author{\IEEEauthorblockN{Wei-Kun Chen, Ya-Feng Liu, Antonio De Domenico, Zhi-Quan Luo, and Yu-Hong Dai}
	\thanks{The work of W.-K. Chen was supported in part by Beijing Institute of Technology Research Fund Program for Young Scholars.
		The works of Y.-F. Liu and Y.-H. Dai were supported in part by the National Natural Science Foundation of China (NSFC) under Grant 12022116, Grant 12021001, Grant 11688101, and Grant 11991021, Grant 11631013, and the Strategic Priority Research Program of Chinese Academy of Sciences, Grant XDA27000000.
		The work of Z.-Q. Luo was supported in part by the National Natural Science Foundation of China (No. 61731018) and the Shenzhen Fundamental Research Fund (No. KQTD201503311441545).
		(\emph{Corresponding author: Ya-Feng Liu.})}
	\thanks{W.-K. Chen is with the School of Mathematics and Statistics/Beijing Key Laboratory on MCAACI, Beijing Institute of Technology, Beijing 100081, China (e-mail: chenweikun@bit.edu.cn).
		Y.-F. Liu {and Y.-H. Dai} are with the State Key Laboratory of Scientific and Engineering Computing, Institute of Computational Mathematics and Scientific/Engineering Computing, Academy of Mathematics and Systems Science, Chinese Academy of Sciences, Beijing 100190, China (e-mail: \{yafliu, dyh\}@lsec.cc.ac.cn).
		A. De Domenico is with the Huawei Technologies Co. Ltd, France Research Center, 92100 Boulogne-Billancourt, France (e-mail: antonio.de.domenico@huawei.com).
		Z.-Q. Luo is with the Shenzhen Research Institute of Big Data and The Chinese University of Hong Kong, Shenzhen 518172, China (e-mail: luozq@cuhk.edu.cn)
	}
}


\maketitle

\begin{abstract}
	Network function virtualization is a promising technology to simultaneously support multiple services with diverse characteristics and requirements in the 5G and beyond networks.
	In particular, each service consists of a predetermined sequence of functions, called
	service function chain (SFC), running on a cloud environment.
	To make different service slices work properly in harmony, it is crucial to
	appropriately select the cloud nodes to deploy the functions in the SFC and flexibly
	route the flow of the services such that these functions are processed in the order
	defined in the corresponding SFC, the end-to-end (E2E) latency constraints of all
	services are guaranteed, and all cloud and communication resource budget constraints are
	respected.
	In this paper, we {first} propose a new mixed binary linear program (MBLP) formulation of the above network slicing problem that optimizes the system energy efficiency while jointly considers the E2E latency requirement, resource budget, flow routing, and functional instantiation.
	{Then, we develop another MBLP formulation and show that the two formulations are equivalent in the sense that they share the same optimal solution. However, since the numbers of variables and constraints in the second problem formulation are significantly smaller than those in the first one, solving the second problem formulation is more computationally efficient especially when the dimension of the corresponding network is large.}
	Numerical results demonstrate the advantage of the proposed formulations compared with the existing ones.
\end{abstract}
\begin{IEEEkeywords} Energy efficiency, E2E delay, network function virtualization, network slicing, resource allocation, service function chain.
\end{IEEEkeywords}

\IEEEpeerreviewmaketitle

\section{Introduction}\label{sec:introduction}
Network function virtualization (NFV) is considered as one of the key technologies for the fifth generation (5G) and beyond 5G (B5G) networks \cite{Mijumbi2016}.
In contrast to traditional networks where service functions are processed by dedicated {hardware} in fixed locations, NFV can efficiently take the advantage of cloud technologies to configure some specific nodes in the network to process network service functions on-demand, and then flexibly establish a customized virtual network for each service request.
In the NFV-enabled network, classical networking nodes are integrated with NFV-enabled nodes (i.e., cloud nodes) and each service consists of a predetermined sequence of virtual network functions (VNFs), called service function chain (SFC) \cite{Zhang2013,Halpern2015,Mirjalily2018}, which can only be processed by certain specific cloud nodes \cite{Zhang2017,Baumgartner2015,Oljira2017}.
In practice, each service flow has to pass all VNFs in its SFC in sequence and its end-to-end (E2E) latency requirement must be satisfied.
However, since all VNFs run over a shared common network infrastructure, it is crucial to allocate network (e.g., cloud and communication) resources to meet the diverse service requirements, subject to {the SFC constraints, the E2E latency constraints of all services}, and all cloud nodes' and links' capacity constraints.
{We call the above resource allocation problem {\emph{network slicing for service-oriented networks}} (\emph{network slicing}\footnote{{Notice that in the 5G network slicing, each slice can be a service function chain or a virtual network (a network that is not a chain). In this paper, we restrict to study the case where all slices are service function chains.}} for short in this paper).} 
\subsection{Related Works}

{Considerable works have been done on network slicing recently; see \cite{Zhang2017}-\cite{Carpio2017} and the references therein.}
More specifically, {references} \cite{Zhang2017} and \cite{Zhang2019} considered the VNF deployment problem with a limited network resource constraint.
Reference \cite{Liu2017} considered the joint problem of new service function chain deployment and in-service function chain readjustment.
{However, references \cite{Zhang2017,Zhang2019}, and \cite{Liu2017}} did not take the E2E latency constraint of each service into consideration, which is one of the key design considerations in the 5G network \cite{Agyapong2014}.
Reference \cite{Domenico2019} investigated a specific two-layer network which consists of a  {central} cloud node and several edge cloud nodes without considering the limited link (bandwidth) capacity constraint.
Reference \cite{Trivisonno2015} presented a formulation with the E2E latency requirement for the virtual network embedding problem in the {5G} systems, again without considering the limited node (computational) capacity constraint.
Obviously, the solution obtained in \cite{Domenico2019} and \cite{Trivisonno2015} without considering the limited link/node capacity constraints in the corresponding problem formulation may lead to violations of resource constraints. 
Reference \cite{Luizelli2015} considered the joint placement of VNFs and routing of traffic flows between the data centers that host the VNFs and proposed to minimize the number of deployed VNFs under latency constraints.
Reference \cite{Jiang2012} studied the data center traffic engineering problem and again emphasized the importance of the joint placement of virtual machines and routing of traffic flows between the data centers hosting the virtual machines.
Reference \cite{Guo2011} investigated the
virtual network embedding problem of shared backup network provision.
However, the above references \cite{Luizelli2015}, \cite{Jiang2012}, and \cite{Guo2011} simplified the routing strategy by selecting paths from a predetermined path set, which may possibly degrade the overall performance. 
Reference \cite{Narayana2013} limited the routing strategy to be one-hop routing.
Reference \cite{Zhang2015} considered a simplified setup where there is only a single function in each SFC.
{References \cite{Addis2015} and \cite{Basta2017} simplified the VNF placement decision-making by assuming that all VNFs in an SFC must be instantiated at the same cloud node.}
Reference \cite{Woldeyohannes2018} proposed a way of analyzing the dependencies between traffic routing and VNF placement in the NFV networks.
Reference \cite{Gouareb2018} studied the problem of placement of VNFs and routing of traffic flows to minimize the overall latency. 
A common assumption in \cite{Baumgartner2015,Oljira2017,Jiang2012,Woldeyohannes2018}, and \cite{Gouareb2018} is that only a single path was allowed to transmit the data flow of each service.
Apparently, formulations based on such assumptions do not fully exploit the flexibility of traffic routing {(i.e., allow the traffic flow to split into multiple paths)} and hence might affect the performance of the whole network.
{We remark that flexible traffic flow routing may cause out-of-order data rate delivery in the physical network. 
However, this can be easily resolved using hash-based splitting; see \cite{Yu2008} for more details. 
Indeed, flexible traffic flow routing is realizable in many modern software defined network (SDN) architectures \cite{Paschos2018}.}
References \cite{Narayana2013} and \cite{Xu2013}-\cite{Carpio2017} assumed that instantiation of a VNF can be split over multiple cloud nodes, which may result in high coordination overhead in practice.

In a short summary, the existing works on the network slicing problem {either do not} consider the E2E latency constraint of each service (e.g., \cite{Zhang2017,Zhang2019,Liu2017}), or do not consider the cloud and communication resource budget constraints (e.g., \cite{Domenico2019,Trivisonno2015}), or simplify the routing strategy by selecting paths from a predetermined path set (e.g.,  \cite{Luizelli2015}-\cite{Guo2011}), or enforce that each flow can only be transmitted via a single path (e.g., \cite{Baumgartner2015}, \cite{Oljira2017}, \cite{Jiang2012}, \cite{Woldeyohannes2018}, \cite{Gouareb2018}), or make impractical assumptions on function initialization (e.g., \cite{Narayana2013}, \cite{Xu2013}-\cite{Carpio2017}).
To the best of our knowledge, for the network slicing problem, none of the existing formulations/works simultaneously takes all of the above practical factors (e.g., E2E latency, resource budget, flexible routing, and coordination overhead) into consideration.
The goal of this work is to fill this research gap, i.e., provide mathematical formulations of the network slicing problem that {\emph{simultaneously}} allow the traffic flows to be flexibly transmitted on (possibly) multiple paths, satisfy the E2E latency requirements of all services and all cloud nodes' and links' capacity constraints, and require that each service function in an SFC is processed by exactly one cloud node.
\subsection{Our Contributions}
In this paper, we propose two new mathematical formulations of the network slicing problem which simultaneously take the E2E latency requirement, resource budget, flow routing, and functional instantiation into consideration.
The main contributions of this paper are summarized as follows.

\begin{itemize}
	\item 
	      By integrating the traffic routing flexibility into the formulation in \cite{Woldeyohannes2018}, we first propose a mixed binary \emph{linear} programming (MBLP) formulation (see problem \eqref{mip} further ahead), which is natural (in the terms of its design variables) and can be solved by standard solvers like Gurobi \cite{Gurobi}.
	      The formulated problem minimizes a weighted sum of the total power
	      consumption of the whole {cloud} network (equivalent to the total number of activated cloud nodes) and the total delay of all services subject to {the SFC constraints, the E2E latency constraints of all services, and all cloud nodes' and links' capacity constraints}.
	      {Notice that minimizing the total delay of all services is advantageous for some delay critical tasks as well as avoiding cycles in the traffic flow.}
	\item 
	      {Since the numbers of variables and constraints are huge in the above formulation, we then develop an equivalent MBLP reformulation (in the sense that the two formulations share the same optimal solutions) whose numbers of variables and constraints are significantly smaller (see problem \eqref{newmip} further ahead), which makes it more efficient to solve the network slicing problem especially when the dimension of the network is large.}
\end{itemize}
\noindent Simulation results demonstrate the efficiency and effectiveness of the proposed formulations. More specifically, our simulation results show: 1) the compact formulation significantly outperforms the natural formulation in terms of the solution efficiency and is able to solve problems in a network with realistic dimensions;
2) {our proposed formulations are more effective than the existing formulations in \cite{Zhang2017} and \cite{Woldeyohannes2018} in terms of the solution quality and are able to flexibly route the traffic flows and guarantee the E2E latency of all services.}

The paper is organized as follows. Section \ref{sec:modelformulation} first introduces the system model, followed by an illustrative example that motivates this work.
Section \ref{naturalformulation} presents a natural formulation for the network slicing problem and Section \ref{compactformulation} presents a more compact problem formulation.
Section \ref{subsec:experiments} reports the computational results.
Finally, Section \ref{conclusions} draws the conclusion.

\section{Problem Statement}
\label{sec:modelformulation}

\subsection{System Model}
\label{subsec-systemmodel}
Consider a directed network $\mathcal{G}=\{\mathcal{I},\mathcal{L}\}$, where $\mathcal{I}=\{i\}$ is the set of nodes and $\mathcal{L}=\{(i,j)\}$ is the set of links. 
We require that the total data rate on each link $(i,j)$ is upper bounded by the capacity $C_{i,j}$.
Due to this, the queuing delay on each link can be assumed to be negligible; see \cite{Liu2016}.
As a result, we can assume that the expected (communication) delay on each link is equal to the propagation delay, which is known as a constant $ d_{i,j} $ \cite{Luizelli2015,Woldeyohannes2018,Mohammadkhan2015}.
Let $ \mathcal{V} $ be a subset of $ \mathcal{I} $ denoting the set of the cloud nodes.
Each cloud node $ v $ has a computational capacity $ \mu_v $ and we assume as in \cite{Zhang2017} that processing one unit of data rate requires one unit of (normalized) computational capacity.
The network supports a set of flows $\mathcal{K}=\{k\}$.
Let $S(k)$ and $D(k)$ be the source and destination nodes of flow $k$, respectively, and suppose that $S(k),D(k)\notin \mathcal{V}$.
Each flow $ k $ relates to a distinct service, which is given by an SFC consisting of $ \ell_k $ service functions that have to be performed in sequence by the network:
\begin{equation}
	\label{sequence}
	f_{1}^k\rightarrow f_{2}^k\rightarrow \cdots \rightarrow f_{\ell_k}^k.
\end{equation}
As required in \cite{Zhang2017,Domenico2019}, and \cite{Woldeyohannes2018}, to minimize the coordination overhead, each function must be instantiated at exactly one cloud node.
If function $ f^k_s $, $ s \in \mathcal{F}(k) := \{1,\ldots, \ell_k\} $, is processed by cloud node $ v $ in $ \mathcal{V} $, \textchange{we assume that  the expected NFV delay depends on the data rate consumed by function $ f^k_s $ and the hardware of the cloud node (e.g., CPU, memory), and through the pre-computation it can be seen as a constant $	d_{v,s}(k) $; see \cite{Luizelli2015} and \cite{Woldeyohannes2018}.}
{The service function rate that no function of flow $k$ has been processed is denoted as $\lambda_0(k)$. Similarly, the service function rate that function $f_s^k$ has just been processed (and function $f_s^{k+1}$ has not been processed) is denoted as $\lambda_{s}(k)$.}
Each flow $ k $ is required to have an {E2E latency} guarantee, denoted as $ \Theta_k $.
\subsection{The Network Slicing Problem and an Illustrative Example}
As all service functions run over a shared common network infrastructure,
the network slicing problem aims to allocate cloud and communication resources
and to determine functional
instantiation of all flows and the routes and associated
data
rates of all flows on the corresponding routes to meet diverse service
requirements.
In order to obtain a satisfactory solution, it is crucial to establish a problem
formulation that jointly takes various practical factors, especially flexible
routing and E2E delay, into consideration.
Indeed, flexible routing, as used in \cite{Zhang2017,Zhang2019}, and \cite{Liu2017}, allows the
traffic flows to flexibly select their routes and associated data rates on the
corresponding routes according to the network infrastructure (e.g., links'
capacities), and thus can possibly improve the solution quality (as compared with the routing strategy of selecting paths from a predetermined path set or
enforcing each flow to transmit on only a single path).
In addition, delay is one of the key metrics in the 5G networks \cite{Agyapong2014} and
a virtualized communication system requires the E2E delays of all services to
be below given thresholds \cite{Domenico2019}.
Next, we shall use an illustrative example to show how flexible routing and E2E latency affect the solution of the network slicing problem.

\begin{figure}[h]
	\centering
	\begin{tikzpicture}[scale=\textwidth/20cm]
		\draw [->,line width=0.8pt] (1.6,5.4) -- (3.85,7.7);
		\draw [->,line width=0.8pt] (1.6,5.4) -- (3.95,3.05);
		\draw [->,line width=0.8pt] (3.93,7.77) -- (8.67,5.5);
		\draw [->,line width=0.8pt] (4,3) -- (8.65,5.38);
		\draw [->,line width=0.8pt] (4,3) -- (3.92,7.7);
		\draw [->,line width=0.8pt] (8.71,5.43) -- (6.05,5.43);
		\draw [->,line width=0.8pt] (5.97,5.35) -- (4,7.7);
		\begin{small}
			\draw [fill=black] (1.6,5.4) circle (2.5pt);
			\draw[color=black] (1.5,5.7) node {A};
			\draw [fill=black] (3.93,7.77) circle (2.5pt);
			\draw[color=black] (4.2,7.9) node {B};
			\draw [fill=black] (4,3) ++(-2.5pt,0 pt) -- ++(2.5pt,2.5pt)--++(2.5pt,-2.5pt)--++(-2.5pt,-2.5pt)--++(-2.5pt,2.5pt);
			\draw[color=black] (4.35,3.5) node {C(4)};
			\draw [fill=black] (5.97,5.43) circle (2.5pt);
			\draw[color=black] (6,5.7) node {D};
			\draw [fill=black] (8.71,5.43) ++(-2.5pt,0 pt) -- ++(2.5pt,2.5pt)--++(2.5pt,-2.5pt)--++(-2.5pt,-2.5pt)--++(-2.5pt,2.5pt);
			\draw[color=black] (8.8,5.8) node {E(8)};
			\draw[color=black] (2.5,6.8) node {(2,1)};
			\draw[color=black] (3.2,4.3) node {(2,1)};
			\draw[color=black] (6.6,6.8) node {(2,1)};
			\draw[color=black] (6.1,4.4) node {(2,1)};
			\draw[color=black] (4.4,5.5) node {(2,1)};
			\draw[color=black] (7.3,5.6) node {(4,1)};
			\draw[color=black] (5.4,6.6) node {(2,1)};
			\draw (6.95,3.26) node[anchor=north west] {{Cloud nodes}};
			\draw [fill=black] (6.8,3.0) ++(-2.5pt,0 pt) -- ++(2.5pt,2.5pt)--++(2.5pt,-2.5pt)--++(-2.5pt,-2.5pt)--++(-2.5pt,2.5pt);
			\draw (0,2.76) node[anchor=north west] {{(a,b) over each link: the link capacity is $a$ and the communication}};
			\draw (0,2.26) node[anchor=north west] {{~~~~~~~~~~~~~~~~~~~~~~~~~delay is $b$ over this link}};
			\draw (0,1.76) node[anchor=north west] {{(c) over each node: the node capacity is $c$}};
		\end{small}
	\end{tikzpicture}
	\caption{A network example.}
	\label{originalmap}
\end{figure}

Consider the network example in Fig. \ref{originalmap}. As shown in
Fig. \ref{originalmap}, the computational capacities on cloud nodes C and
E are $4$ and $8$, respectively, and all links' capacities are $2$ except link $(\text{E},\text{D})$ whose link
capacity is $4$. The communication delay on each link $ (i,j) $ is $ d_{i,j}=1 $.
There are two different functions, i.e., $ f^1 $ and $ f^2 $.
Cloud node C can only process function $ f^2 $, while cloud node E can {process} both functions $ f^1 $ and $ f^2 $.
The NFV delays of both functions at (possible) cloud node C and cloud node E are $1$.

\emph{Flexible routing allows the traffic flows to transmit over possible multiple paths and thus alleviates the effects of (low) link capacities on the network slicing problem.}
Suppose that there is only a single service from node A to node D with the E2E delay threshold being $ \Theta_{1}=5 $.
{The SFC considered is $f_1 \rightarrow f_2 $} and all
the service function rates $ \lambda_0(1) $, $ \lambda_1(1) $, and $\lambda_2(1)$ are $ 4 $.
If only a single path is allowed to transmit the traffic flow (as in \cite{Baumgartner2015,Oljira2017,Jiang2012,Woldeyohannes2018}, and \cite{Gouareb2018}), no solution exists for this example due to the \emph{limited} link capacity.
Indeed, either link $(\text{A},\text{B})$'s or link $(\text{A},\text{C})$'s capacity is 2, which is not enough to support a traffic flow with a data rate being $4$.
However, in sharp contrast, if the traffic flow can be flexibly transmitted on multiple
paths, a feasible solution is given as follows: first use paths
$ \text{A} \rightarrow \text{B} \rightarrow \text{E} $ and
$ \text{A} \rightarrow \text{C} \rightarrow \text{E} $ to simultaneously route
the flow from node A to node E where the data rates on both paths are
$ 2 $; after functions $ f^1 $ and $f^2$ being processed by node E, route the flow
to the destination node D using link $  (\text{E}, \text{D})$.
For this solution, the communication delays from node A to node E and node
E to node D are $\max\{ d_{\text{A},\text{B}} + d_{\text{B},\text{E}},d_{\text{A},\text{C}} + d_{\text{C},\text{E}}\} = 2$
and $d_{\text{E},\text{D}}=1$, respectively.
Thus, the total communication delay is $3$.
In addition, as functions $f^1$ and $f^2$ are hosted at node E, the total NFV delay is $2$.
Then, the E2E delay is equal to the sum of the total communication and NFV delays, which is $5$, implying that such a solution satisfies the E2E latency requirement of the service.
This clearly shows the benefit of flexible routing in the network slicing
problem, i.e., it alleviates the effects of (low) links' capacities to support the services.

\emph{The E2E latency constraints of all services need to be explicitly enforced in the problem formulation.}
Suppose there are in total two services where service I is from node A to node D with the E2E delay threshold being $\Theta_{1}=4$ and service II is from node A to node B with the E2E delay threshold being $\Theta_2=3$.
Functions $ f^1 $ and $ f^2 $ need to be processed for services I and II, respectively; for each service $ k $, the service function rates $ \lambda_0(k) $ and $ \lambda_1(k) $ are $1$.
Our objective is to minimize the number of activated cloud nodes, because this reflects the total energy consumption in the network (as shown in Section \ref{subsec:mipformulation}). 
Suppose that the {E2E} latency constraint of each service is not enforced (as in \cite{Zhang2017}, \cite{Zhang2019}, and \cite{Liu2017}).
Then, the optimal solution is that both functions are processed by cloud node E as follows:\vspace{0.1cm}\\
$\begin{array}{l}
		\text{Service I}: \text{A} \rightarrow \text{B} \rightarrow \text{E~{(providing function $ f^1 $)}} \rightarrow  \text{D},                                       \\
		\text{Service II}: \text{A} \rightarrow \text{C} \rightarrow \text{E~{(providing function $ f^2 $)}} \rightarrow  \text{D}  \rightarrow \text{B}. \vspace{0.1cm} \\
	\end{array}$\\
For service II, it traverses $ 4 $ links from node A to node B with a total communication delay being $ 4 $, 	which, pluses the NFV delay $1$, obviously violates its E2E latency constraint.
Therefore, to obtain a better solution, it is necessary to enforce the E2E latency constraints of the two services explicitly in the problem formulation.
Then, the solution of the problem with the E2E latency constraints is:\vspace{0.1cm}\\
$\begin{array}{l}
		\text{Service I}: \text{A} \rightarrow \text{B} \rightarrow \text{E~{(providing function $ f^1 $)}} \rightarrow  \text{D}, \\
		\text{Service II}: \text{A} \rightarrow  \text{C~{(providing function $ f^2 $)}} \rightarrow  \text{B}.
	\end{array}$\vspace{0.1cm}\\
In this solution, the E2E delays of service I and service II in the above solution are $ 4 $ and $ 3 $, respectively, which satisfy the E2E latency requirements of both services.

In summary, the example in Fig. \ref{originalmap} illustrates that, in order to
obtain a satisfactory solution to the network slicing problem, it is crucial to allow
the flexible routing and enforce the E2E latency constraints of all services
explicitly in the problem formulation.

\section{Problem Formulation}
\label{naturalformulation}
\subsection{Preview of the Problem Formulation}
The network slicing problem is to determine functional instantiation of all flows and the routes and associated data rates of all flows on the routes while satisfying the SFC requirements, the E2E delay requirements, and the capacity constraints on all cloud nodes and links.
In this section, we shall provide a new problem formulation of the network slicing problem which takes practical factors like flexible routing and E2E latency requirements into consideration; see problem \eqref{mip} further ahead.

Our proposed formulation builds upon those in two closely related works \cite{Woldeyohannes2018} and \cite{Zhang2017} but takes further steps.
More specifically, in sharp contrast to the formulation in \cite{Woldeyohannes2018} where only a \emph{single} path is allowed to route the traffic flow of each service (between two cloud nodes processing two adjacent functions of a service), our proposed formulation allows the traffic flow of each service to transmit on (possibly) multiple paths and hence fully exploits the flexibility of traffic routing; different from that in \cite{Zhang2017}, our formulation guarantees the E2E delay of all services, which consists of two types of delays: total communication delay on the links and total NFV delay on the cloud nodes. 

Next, we describe the constraints and objective function of our formulation in details.

\subsection{Various Constraints}
\label{subsec:equivalenttrans}
In this subsection, we shall present various constraints of the network slicing problem. Before doing it, we first present an equivalent virtual network that plays an important role in presenting the constraints.
{\bf\vspace{0.1cm}\\$\bullet$ An Equivalent Virtual Network\vspace{0.1cm}}
	
	References \cite{Zhang2017,Xu2017}, and \cite{Cheng2018} assume that each cloud 
	node can process at most one function of the same flow in the physical network.
	This assumption enforces that different functions of each flow must be hosted at
	different cloud nodes, which thus potentially increases the number of cloud nodes
	needed to be activated {(and therefore the power consumption in the cloud network)}
	and the total communication delay of the flow (as the flow
	needs to traverse more links).
	{Therefore, in this paper, we do not impose such an assumption, i.e., we allow that multiple functions of the same flow are processed by the same cloud node. To do this, we introduce an equivalent virtual network in the following.}
	In the next, we shall call the original network as the physical network to distinguish it from the constructed virtual network.
	
	We construct the virtual network as follows. Let $ \mathcal{\bar{G}}= (\mathcal{\bar{I}}, \mathcal{\bar{L}}) $ denote the virtual network and $ \mathcal{\bar{V}} $ denote the set of the cloud nodes in the virtual network.
	We first construct $\mathcal{{\bar{V}}}$ and $\mathcal{\bar{I}}$.
	Let $n_v$ be the number of functions that (physical) cloud node $v$ can process.
	Denote $\ell_{\max}$ be the maximum number of functions in an SFC among all flows, i.e., $\ell_{\max}=\max_{k\in \mathcal{K}}{\ell_k}$.
	{Then,} the maximum number of functions that can be possibly hosted at (physical) cloud node $v$ for each flow is $m_v = \min \{ n_v , \ell_{\max} \}$.
	For each cloud node $v \in \mathcal{V} $ in the physical network, we first set $ v $ as \textchange{an intermediate} node (i.e., a node that can route flows but cannot
	process any service function) and then introduce $ m_v $ virtual cloud nodes, namely, $ \mathcal{I}_{v}= \{v_1, \ldots, v_{m_v}\}  $.
	Then, the sets of cloud nodes and nodes in the virtual network are defined as $ \mathcal{\bar{V}}=   \mathcal{I}_1 \cup \cdots \cup \mathcal{I}_{|\mathcal{V}|} $ and $ \mathcal{\bar{I}} =\mathcal{I} \cup \mathcal{\bar{V}} $, respectively.
	Next, we construct $\bar{\mathcal{L}}$. First, $ \mathcal{\barr{L}} $ contains all links in $ \mathcal{L} $.
	In addition, for each $ v \in \mathcal{V} $ and $ 1 \leq t \leq m_v $, we construct the links $ (v,v_t) \in \mathcal{\bar{L}}  $ and $  (v_t, v) \in \mathcal{\bar{L}}  $.
	Therefore, each virtual cloud node is associated with exactly two links in the virtual network.
	We now specify the cloud nodes' and links' capacities and delays in the virtual network.
	Since each (virtual) cloud node $ v_t $, $ t\in \{1, \ldots, m_v \} $, is a copy of (physical) node $ v $, the NFV delay of function $ f_s^k $ on it is the same as that on (physical) node $ v $, i.e., $ d_{v,s}(k) $; the sum of the computational capacities over all (virtual) nodes $ v_1, \ldots, v_{m_v} $ is $ \mu_v $.
	For link $(i,j) \in \mathcal{\bar{L}}$, if $(i,j) \in \mathcal{L}$, then its link capacity and delay are the same as those in the physical network, i.e., $C_{i,j}$ and $d_{i,j}$; otherwise, we let $ d_{i,j} = 0 $ and $C_{i,j} = +\infty$.
	In Fig. \ref{map_example}, we illustrate the constructed virtual network based on the physical network in Fig. \ref{originalmap} (Recall that in the physical network in Fig. \ref{originalmap}, node E can process functions $f^1$ and $f^2$ while node C can process only function $f^1$). 
	
	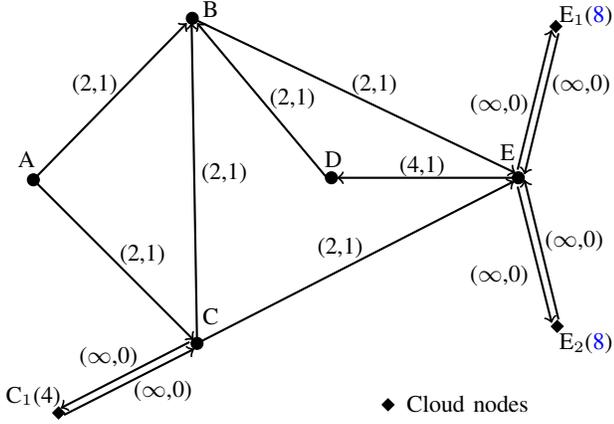
\begin{figure}[h]
		\centering
		\begin{tikzpicture}[scale=\textwidth/20cm]
			\begin{small}
				\draw [->,line width=0.8pt] (1.6,5.4) -- (3.85,7.7);
				\draw [->,line width=0.8pt] (1.6,5.4) -- (3.95,3.05);
				\draw [->,line width=0.8pt] (3.93,7.77) -- (8.67,5.5);
				\draw [->,line width=0.8pt] (4,3) -- (8.65,5.38);
				\draw [->,line width=0.8pt] (4,3) -- (3.92,7.7);
				\draw [->,line width=0.8pt] (8.71,5.43) -- (6.05,5.43);
				\draw [->,line width=0.8pt] (5.97,5.35) -- (4,7.7);
				\draw [->,line width=0.8pt] (3.89,3.02) -- (2,2.05);
				\draw [->,line width=0.8pt] (2.05,1.95) -- (3.96,2.92);
				\draw [->,line width=0.8pt] (8.7,5.55) -- (9.2,7.6);
				\draw [->,line width=0.8pt] (9.3,7.55) -- (8.8,5.48); 
				\draw [->,line width=0.8pt] (8.7,5.3) -- (9.2,3.3);
				\draw [->,line width=0.8pt] (9.3,3.35) -- (8.8,5.4);
				\draw [fill=black] (1.6,5.4) circle (2.5pt);
				\draw[color=black] (1.5,5.7) node {A};
				\draw [fill=black] (3.93,7.77) circle (2.5pt);
				\draw[color=black] (4.2,7.9) node {B};
				\draw [fill=black] (4,3) circle (2.5pt);
				\draw[color=black] (4.2,3.4) node {C};
				\draw [fill=black] (5.97,5.43) circle (2.5pt);
				\draw[color=black] (6,5.7) node {D};
				\draw [fill=black] (8.71,5.43) circle (2.5pt);
				\draw[color=black] (8.55,5.8) node {E};
				\draw[color=black] (2.5,6.8) node {(2,1)};
				\draw[color=black] (3.2,4.3) node {(2,1)};
				\draw[color=black] (6.6,6.8) node {(2,1)};
				\draw[color=black] (6.1,4.4) node {(2,1)};
				\draw[color=black] (4.4,5.5) node {(2,1)};
				\draw[color=black] (7.3,5.6) node {(4,1)};
				\draw[color=black] (5.4,6.6) node {(2,1)};
				\draw (6.95,2.36) node[anchor=north west] {{Cloud nodes}};
				\draw [fill=black] (6.8,2.1) ++(-2.5pt,0 pt) -- ++(2.5pt,2.5pt)--++(2.5pt,-2.5pt)--++(-2.5pt,-2.5pt)--++(-2.5pt,2.5pt);
				\draw [fill=black] (9.28,3.25) ++(-2.5pt,0 pt) -- ++(2.5pt,2.5pt)--++(2.5pt,-2.5pt)--++(-2.5pt,-2.5pt)--++(-2.5pt,2.5pt);
				\draw[color=black] (9.7,3) node {$\text{E}_2$({\color{blue}{8}})};
				\draw [fill=black] (9.26,7.65) ++(-2.5pt,0 pt) -- ++(2.5pt,2.5pt)--++(2.5pt,-2.5pt)--++(-2.5pt,-2.5pt)--++(-2.5pt,2.5pt);
				\draw[color=black] (9.7,7.8) node {$\text{E}_1$({\color{blue}{8}})};
				\draw [fill=black] (1.97,1.98) ++(-2.5pt,0 pt) -- ++(2.5pt,2.5pt)--++(2.5pt,-2.5pt)--++(-2.5pt,-2.5pt)--++(-2.5pt,2.5pt);
				\draw[color=black] (1.6,2.22) node {$\text{C}_1$({{4}})};
				\draw[color=black] (9.53,4.5) node {($ \infty $,0)};
				\draw[color=black] (8.43,4) node {($ \infty $,0)};
				\draw[color=black] (9.63,6.8) node {($ \infty $,0)};
				\draw[color=black] (8.43,6.5) node {($ \infty $,0)};
				\draw[color=black] (3.5,2.3) node {($ \infty $,0)};
				\draw[color=black] (2.7,2.8) node {($ \infty $,0)};
			\end{small}
		\end{tikzpicture}
		\label{transformedmap}
		\caption{The virtual network corresponding to the physical network in Fig. \ref{originalmap}. Notice that in the virtual network, nodes C and E are not cloud nodes any more; the sum of cloud nodes $\text{E}_1$'s and $\text{E}_2$'s capacities should not exceed node E's capacity in the physical network.}
		\label{map_example}
	\end{figure}

	In the constructed virtual network, we can, without loss of generality, require that {\emph{if flow $ k $ goes into some virtual cloud node $ v_t $, exactly one service function of flow $ k $'s SFC must be processed by it.}}
	Let us consider two cases. 
	The first case is that flow $k$ passes through cloud node $v$ without any service
	function being processed in the physical network, then, in the virtual network flow
$ k $ does not go into cloud nodes $ v_1, \ldots, v_{m_v} $ (but still goes into node $v$).
	The second case is that flow $ k $ passes through cloud node $ v $ with
$ \tau~(1 \leq \tau \leq m_v) $ service functions being processed in the physical network,
	then, in the virtual network flow $k$ can go into $ \tau $ of the cloud nodes
$ v_1,\ldots, v_{m_v} $ and each of them process only one service function for
	flow $k$.
	In a nutshell, although at most one function of each flow can be processed by a
	cloud node in the virtual network, (possibly) multiple functions of the same flow can
	be processed by the corresponding cloud node in the physical network,
	which plays a critical role in reducing the number of nodes that need to be activated
	and decrease the total communication delay in the physical network.\vspace{0.1cm}\\
	\begin{table}[t]
		\label{notation}
		\caption{Summary of Notations in problem \eqref{mip}.}
		\centering	
		\setlength{\tabcolsep}{3pt} 
		\renewcommand{\arraystretch}{1.2} 
		\begin{tabularx}{0.48\textwidth}{|c|X|}
			\hline
			\multicolumn{2}{|c|}{Parameters}   \\
			\hline 
			$ \mu_v $                        & computational capacity of (physical) cloud node $ v $                                                                                                                                                                                \\
			\hline 
			\textchange{$N(v)$}              & \textchange{the only node that links to cloud node $v$ in the virtual network}                                                                                                                                                       \\
			\hline 
			$ C_{i,j} $                      & communication capacity of link $ (i,j) $                                                                                                                                                                                             \\
			\hline 
			$ d_{i,j} $                      & communication delay of link $ (i,j) $                                                                                                                                                                                                \\
			\hline 
			$ \mathcal{F}(k) $               & the index set that corresponds to flow $ k $'s SFC                                                                                                                                                                                   \\
			\hline
			$ f_s^k $                        & the $ s $-th function in the SFC of flow $ k $                                                                                                                                                                                       \\
			\hline 
			$ d_{v,s}(k) $                   & NFV delay that function $f_s^k$ is hosted at cloud node $ v $                                                                                                                                                                        \\
			\hline 
			$ \lambda_s(k) $                 & service function rate after receiving function $ f_s^k $                                                                                                                                                                             \\
			\hline
			$ \Theta_k $                     & E2E latency threshold of flow $ k $                                                                                                                                                                                                  \\
			\hline
			\multicolumn{2}{|c|}{Variables}                                                                                                                                                                                                                                       \\
			\hline 
			$ y_v $                          & binary variable indicating whether or not (physical) cloud node $ v $ is activated                                                                                                                                                   \\
			\hline 
			$ x_{v,s}(k) $                   & binary variable indicating whether or not (virtual) cloud node $ v $ processes function $ f_s^k $                                                                                                                                    \\
			\hline 
			$ x^0_{v,s}(k) $                 & binary variable indicating whether or not (physical) cloud node $ v $ processes function $ f_s^k $                                                                                                                                   \\
			\hline  
			$ r(k,s,v_s, v_{s+1}, p) $       & data rate on the $p$-th path of flow $(k,s,v_s,v_{s+1})$  that is used to route the traffic flow from (virtual) cloud node $ v_s $ to (virtual) cloud node $ v_{s+1} $ (hosting functions $ f_s^k $ and $ f_{s+1}^k $, respectively) \\
			\hline  
			$ z_{i,j}(k,s,v_s, v_{s+1}, p) $ & binary variable indicating whether or not
			link $ (i,j) $ is on the $p$-th path of flow $(k,s,v_s,v_{s+1})$   \\
			\hline  
			$ r_{i,j}(k,s,v_s, v_{s+1}, p) $ & data rate on link $ (i,j) $, which is used by the $ p $-th path of flow $(k,s,v_s,v_{s+1})$                                                                                                                                          \\
			\hline 
			$ \theta(k,s) $                  & communication delay due to the traffic flow from the cloud node hosting function $ f_s^k $ to the cloud node hosting function $ f_{s+1}^k $                                                                                          \\
			\hline 
			$ \theta_L(k) $                  & total communication delay of flow $ k $                                                                                                                                                                                              \\
			\hline 
			$ \theta_N(k) $                  & total NFV delay of flow $ k $                                                                                                                                                                                                        \\
			\hline 
		\end{tabularx}
	\end{table}
	{\bf\noindent$\bullet$ VNF Placement and Node Capacity Constraints\vspace{0.1cm}}\\
	\indent We introduce the binary variable $x_{v,s}(k),~s=1,\ldots,\ell_k$, to indicate whether or not node $v$ {in $\mathcal{\bar{V}}$ processes function $f^k_s$} in the virtual network, i.e.,
	\begin{eqnarray*}
		x_{v,s}(k)&=&\left\{\begin{array}{ll}1,
			   & {\text{if~node}}~ v ~{\text{{processes}~function}}~f^k_s;~ \\
			0, & {\text{otherwise}}.\end{array}\right.
	\end{eqnarray*}
	Notice that in practice, node $ v $ may not be able to process function $ f_s^k $ \cite{Zhang2017,Baumgartner2015,Oljira2017}, and in this case, we can simply set $ x_{v,s}(k) = 0 $.
	As analyzed before, in the virtual network, each (virtual) cloud node can process at
	most one service function for each flow:
	\begin{equation}\label{key}
		\sum_{s \in \mathcal{F}(k)} x_{v,s} (k) \leq 1, \ \forall~v \in \mathcal{\barr{V}},\ \forall~k\in \mathcal{K}.
	\end{equation}

	{For notational convenience, we introduce a binary variable $ x^0_{v, s}(k) $ denoting whether or not node $v$ processes function $f^k_s$ in the physical network.
	For each flow $ k $, we require that each service function in its SFC is processed by exactly one cloud node (in the physical network), i.e.,
	\begin{equation}
		\label{onlyonenode}
		\sum_{v\in \mathcal{{V}}}x^0_{v,s}(k)=1,~\forall ~k \in \mathcal{K}, ~\forall ~s\in  \mathcal{F}(k).
	\end{equation}
	By the definitions of $x^0_{v,s}(k)$ and $ x_{v,s}(k) $, we have
	\begin{align}
		  & x^0_{v,s}(k)=x_{v_1,s}(k)+\cdots+x_{v_{m_v},s}(k), \nonumber                                                       \\
		  & \qquad \qquad\qquad\forall~v \in \mathcal{V}, ~\forall~k \in \mathcal{K}, ~  \forall~s\in \mathcal{F}(k)\label{4}.
	\end{align}
	Here $v_1, \ldots, v_{m_v}$ are the cloud nodes in the virtual network corresponding to the cloud node $v$ in the physical network.
	Notice that since $x^0_{v,s}(k) \in \{0,1\}$, the right-hand side of \eqref{4} must be less than or equal to one, which is enforced by the requirement that each function is processed by exactly one cloud node (in the virtual network).}
	
	Let $y_v\in\{0,1\}$ represent the activation of cloud node $v$ (in the physical network), i.e., if $ y_v =1 $, node $ v $ is activated and powered on; otherwise, it is powered off. Thus
	\begin{equation}
		\label{xyrelation}
		\xhatzero_{v,s}(k) \leq  y_v, ~ \forall~v \in \mathcal{V},~\forall~k \in \mathcal{K},~\forall~s \in \mathcal{F}(k). 
	\end{equation}

	Since processing one unit of data rate consumes one unit of (normalized) computational capacity, we can get the node capacity constraints as follows:
	\begin{equation}
		\label{nodecapcons}
		\sum_{k\in \mathcal{K}}\sum_{s \in \mathcal{F}(k)}\lambda_s(k)\xhatzero_{v,s}(k)\leq \mu_v y_v,~\forall~ v \in \mathcal{V}.
	\end{equation}

	The total capacities of activated (physical) cloud nodes should be larger than or equal to the total required data rates of all services. Then, we have
	\begin{equation}
		\label{totalcapacitiescons}
		\sum_{v \in \mathcal{V}} \mu_v y_v \geq \sum_{k\in \mathcal{K}}\sum_{s \in \mathcal{F}(k)}\lambda_s(k) .
	\end{equation}
	Constraint \eqref{totalcapacitiescons} is redundant since it can be obtained by adding all the constraints in \eqref{nodecapcons} and using \eqref{onlyonenode}.
	However, adding such constraint in the problem formulation can potentially improve its solution efficiency. {Similar trick} can be found in \cite{Gadegaard2018} and \cite{Guo2019}.\vspace{0.1cm}\\
	{\bf\noindent$\bullet$ Flexible Routing and Link Capacity Constraints\vspace{0.1cm}}\\
	\indent In practice, each flow $ k $ should go into the cloud nodes in the prespecified order of the functions in its SFC, starting from the source node $ S(k) $ and ending at the destination node $ D(k) $.
	{We use $ (k,s,v_s,v_{s+1}) $ to denote the flow that is routed between (virtual) cloud nodes $ v_s $ and $ v_{s+1} $ hosting functions $ f_s^k $ and $ f_{s+1}^k $, respectively.}
	\textchange{This means that $v_s$ and $v_{s+1}$ are the source and destination nodes of flow $(k,s,v_s,v_{s+1})$.}
	Particularly, if $ s=0 $ and $ s=\ell_k $, we assume without loss of generality that the virtual service functions $ f_0^k $ and $ f_{\ell_k+1}^k $ are hosted at source and destination nodes $ S(k) $ and $ D(k) $, respectively.
	Suppose that there are at most $P$ paths that can be used for routing flow $ (k,s,v_s,v_{s+1}) $.
	In general, such an assumption on the number of paths may affect the solution's quality.
	Indeed, the choice of  $P$ offers a tradeoff between the flexibility of traffic routing in the problem formulation and the computational complexity of solving it: the larger the parameter $P$ is, the more flexibility of routing and the higher the computational complexity.
	A special choice is $P=|\mathcal{\bar{L}}|$.
	Such a choice will not affect the solution's quality.
	In fact, it is a well-known result from classical network flow theory that any routes between two nodes can be decomposed into the sum of at most $ |\mathcal{\barr{L}}| $ routes on the paths and a circulation; see \cite[Theorem 3.5]{Ahuja1993}.
	\textchange{It is also worthwhile mentioning that in practice, setting $P$ to be a small value (e.g., $P=2$) can significantly improve the network performance, as compared with setting $P=1$ \cite{He2008}.}
	
	Denote $ \mathcal{P}=\{1, \ldots,P\} $.
	For flow $ (k,s,v_s,v_{s+1}) $, let $ r(k,s,v_s,v_{s+1},p) $ be the data rate on the $ p $-th path.
	We need to introduce this variable in our formulation, as the traffic flow of each service in our formulation is allowed to transmit on (possibly) multiple paths in order to exploit the flexibility of traffic routing, which is in sharp contrast to the formulation in \cite{Woldeyohannes2018}.
	As can be seen later (e.g., in Eqs. \eqref{relalambdaandx1}-\eqref{relalambdaandx3}, \eqref{mediacons1}, \eqref{mediacons3}, \eqref{firstcons1}, \eqref{firstcons3}, \eqref{lastcons1}, and \eqref{lastcons3}), this variable plays an important role in the flow conversation constraints associated with the $p$-th path and cloud nodes $v_s$ and $v_{s+1}$.
	Notice that by \eqref{key} and the fact that $ S(k), D(k) \notin \mathcal{V} $, we must have $ v_s \neq v_{s+1} $.
	For each $ k \in \mathcal{K} $, from the definitions of $ x_{v_s,s}(k) $, $ x_{v_{s+1}, s+1}(k) $, and $ r(k,s,v_s, v_{s+1}, p) $, we have
	\begingroup
	\allowdisplaybreaks
	\begin{align} 
		  & \sum_{p \in \mathcal{P}}  r(k, {s}, v_s, v_{s+1}, p) =  \lambda_{s}(k) x_{v_s,s}(k)  x_{v_{s+1},{s+1}}(k),  \nonumber                             \\ 
		  & \qquad \qquad\qquad \qquad  \forall~s\in \mathcal{F}(k)\backslash\{\ell_k\},~\forall~v_s, v_{s+1} \in \mathcal{\barr{V}}, \label{relalambdaandx1} \\
		  & \sum_{p \in \mathcal{P}}  r(k, 0, S(k),v_1, p)=\lambda_{0}(k) x_{v_1,1}(k),  ~\forall~v_1 \in \mathcal{\barr{V}}, \label{relalambdaandx2}         \\
		  & \sum_{p \in \mathcal{P}}  r(k, {\ell_k}, v_{\ell_k},D(k), p) =\lambda_{\ell_k}(k)x_{v_{\ell_k},{\ell_k}}(k),  \nonumber                           \\ 
		  & \qquad \qquad\qquad \qquad\qquad\qquad\qquad\qquad\qquad
		\forall~v_{\ell_k} \in \mathcal{\barr{V}}. \label{relalambdaandx3}
	\end{align}
	\endgroup
	Constraint \eqref{relalambdaandx1} indicates that if the $ s $-th and $ (s+1) $-th functions of flow $ k $ (i.e., functions $ f_s^k $ and $ f_{s+1}^k $) are hosted at (virtual) cloud nodes $ v_s $ and $ v_{s+1} $, respectively, then the total {data} rates sent from $ v_s $ to $ v_{s+1} $ must be equal to $ \lambda_s(k) $.
	Similarly, if function $ f_1^k $ is hosted at (virtual) cloud node $ v_1 $, constraint \eqref{relalambdaandx2} guarantees that the total data rates sent from $S(k) $ to $ v_{1} $ must be equal to $ \lambda_0(k) $; if function $ f_{\ell_k}^k $ is hosted at (virtual) cloud node $ v_{\ell_k} $, constraint \eqref{relalambdaandx3} guarantees that total data rates sent from $v_{\ell_k} $ to $ D(k) $ must be equal to $ \lambda_{\ell_k}(k) $.

	We then use $ z_{i,j}(k, s,v_s, v_{s+1}, p) =1 $ to denote that link $ (i,j) $ is on the $ p $-th path of flow $ (k,s,v_s,v_{s+1}) $; otherwise, $ z_{i,j}(k, s,v_s, v_{s+1}, p) =0 $.
	By definition, for all $ k \in \mathcal{K} $, $ p \in \mathcal{P} $, and $ (i,j) \in \mathcal{\barr{L}} $, we have
	\begingroup
	\allowdisplaybreaks
	\begin{align}
		  & z_{i,j}(k, s, v_s,v_{s+1}, p ) \leq  x_{v_s,s}(k) x_{v_{s+1}, {s+1}}(k),\nonumber                                                             \\
		  & \qquad \qquad\qquad\qquad \forall~s \in \mathcal{F}(k)\backslash\{\ell_k\},  ~\forall~v_s, v_{s+1} \in \mathcal{\barr{V}}, \label{relazandx1} \\
		  & z_{i,j}(k, 0, S(k), v_{1}, p ) \leq  x_{v_1,1}(k), ~ \forall~v_1 \in  \mathcal{\barr{V}}, \label{relazandx2}                                  \\		
		  & z_{i,j}(k, {\ell_k}, v_{\ell_k}, D(k),p ) \leq  x_{v_{\ell_k},{\ell_k}}(k),~\forall~v_{\ell_k }\in  \mathcal{\barr{V}}. \label{relazandx3}
	\end{align}
	\endgroup
	
	Recall that, in the virtual network, if flow $ k $ goes into cloud node $ v $, exactly one service function in flow $ k $'s SFC must be processed by this node.
	Then, if $v = v_s$ or $ v= v_{s+1} $, at most one of cloud node $v$'s two links can be used by the $ p $-th path of flow $ (k,s,v_s, v_{s+1}) $; otherwise none of cloud node $v$'s two links can be used by the $ p $-th path of flow $ (k,s,v_s, v_{s+1}) $.
	Therefore, for each $ v \in \mathcal{\barr{V}} $, $ k \in \mathcal{K} $, $s \in \mathcal{F}(k)\cup \{0\}$, $v_s, v_{s+1} \in \mathcal{\bar{V}}$, and $ p \in \mathcal{P} $, we have
	\begin{equation*}
		\textchange{z_{v,{N}(v)}(k,s,v_s,v_{s+1},p) + z_{N(v),v}(k,s,v_s,v_{s+1},p)}
	\end{equation*}
	\begin{numcases}{}
		\leq 1,     & \text{if}~$v = v_s~\text{or}~v=v_{s+1};$ \label{loops-1} \\
		=0,                                   & otherwise, \label{loops-2}
	\end{numcases}
	{where $N(v)$ denotes the only node that links to cloud node $v$ in the virtual network (and $N(v)$ is also the physical node connecting with virtual cloud node $v$).}

	If $ z_{i,j}(k, s,v_s, v_{s+1}, p) =1 $, let $ r_{i,j}(k, s,v_s, v_{s+1}, p )  $ denote the associated amount of {data} rate.
	By definition, for each $ (i,j) \in \mathcal{\barr{L}} $, $ k \in \mathcal{K} $, $s \in \mathcal{F}(k)\cup \{0\}$, $v_s, v_{s+1} \mathcal{\in \barr{V}}$, and $ p\in \mathcal{P} $, we have the following coupling constraint:
	\begin{align}
		  & r_{i,j}(k, s, v_s, v_{s+1}, p ) \leq \lambda_{s}(k)  z_{i,j}(k, s, v_s, v_{s+1},p ).         \label{relarandz1}
	\end{align}
	The total data rates on link $ (i,j) $ is upper bounded by capacity $ C_{i,j} $:
	\begin{align}
		  & \label{linkcapcons}
		\sum_{k \in \mathcal{K}} \sum_{s\in \mathcal{F}(k) \cup \{0\}} \sum_{v_s,v_{s+1}\in \mathcal{\barr{V}}}   \sum_{p \in \mathcal{P}} r_{i,j}(k, s, v_s, v_{s+1}, p)  \nonumber \\
		  & \qquad \qquad\qquad\qquad\qquad\qquad\leq C_{i,j},  ~\forall~(i,j) \in \mathcal{L} .
	\end{align}
	{\bf\noindent$\bullet$ SFC Constraints\vspace{0.1cm}}
	
	To ensure the functions of each flow are followed in the prespecified order as in \eqref{sequence}, we need to introduce several constraints below.
	We start with the flow conservation constraints of each intermediate function of each flow.
	In particular, for each $ k \in \mathcal{K} $, $ s \in \mathcal{F}(k)\backslash\{\ell_k\} $, $ v_s, v_{s+1}  \in \mathcal{\barr{V}}$, $ p \in \mathcal{P} $, and $ i \in \mathcal{\barr{I}} $, we have
	\begin{equation*}
		\sum_{j: (j,i) \in \mathcal{\barr{L}}} r_{j,i}(k, s, v_s, v_{s+1}, p) - \sum_{j: (i,j) \in \mathcal{\barr{L}}} r_{i,j}(k, s, v_s, v_{s+1}, p)
	\end{equation*}
	\begin{numcases}{=}
		-r(k, s, v_s, v_{s+1},p),     & ~~~\text{if}~$i = v_s;$ \label{mediacons1} \\
		0,                                   & ~~~\text{if}~$i \neq v_s, ~v_{s+1};$  \label{mediacons2}   \\
		r(k, s, v_s, v_{s+1},p),    &  ~~~\text{if}~$i=v_{s+1};$  \label{mediacons3}
	\end{numcases}
	\begin{equation*}
		\sum_{j: (j,i) \in \mathcal{\barr{L}}} z_{j,i}(k, s, v_s, v_{s+1}, p) - \sum_{j: (i,j) \in \mathcal{\barr{L}}} z_{i,j}(k, s, v_s, v_{s+1}, p)
	\end{equation*}
	\begin{numcases}{=}
		-x_{v_s,s}(k)  x_{v_{s+1},{s+1}}(k),     & \text{if}~$i = v_s;$\label{mediacons4}   \\
		0,                                   & \text{if}~$i \neq v_s, ~v_{s+1};$ \label{mediacons5}  \\
		x_{v_s,s}(k)  x_{v_{s+1},{s+1}}(k)    &  \text{if}~$i=v_{s+1}$.    \label{mediacons6}
	\end{numcases}
	First, note that constraints \eqref{mediacons1}, \eqref{mediacons2}, and \eqref{mediacons3} are flow conservation constraints for the data rate.
	Second, we need another three flow conservation constraints \eqref{mediacons4}, \eqref{mediacons5}, and \eqref{mediacons6}.
	To be more precise, for each pair of cloud nodes $ v_s $ and $ v_{s+1} $, considering constraints \eqref{mediacons4}, \eqref{mediacons5}, and \eqref{mediacons6}, we only need to look at the case that $ x_{v_s,s}(k) =1 $ and $ x_{v_{s+1}, s+1}(k)=1 $, since otherwise from constraint \eqref{relazandx1}, all the variables $ z_{i,j}(k,s,v_s,v_{s+1}, p) $ in \eqref{mediacons4}, \eqref{mediacons5}, and \eqref{mediacons6} must be equal to zero.
	Constraint \eqref{mediacons5} enforces that for every node that does not host
	functions $ f_s^k $ and $ f_{s+1}^k $, if one of its incoming links is assigned for the
$ p $-th path of flow $(k,s,v_s,v_{s+1})$, then one of its outgoing links must
	also be assigned for this path.
	Similarly, constraint \eqref{mediacons4} implies that, if function $ f_s^k $ is hosted
	at (virtual) cloud node $ v_s $, node $ v_s $'s outgoing link must be assigned for the $ p $-th path
	of flow $(k,s,v_s,v_{s+1})$ and node $ v_s $'s incoming link cannot be assigned
	for this path; constraint \eqref{mediacons6} implies that if function $ f_{s+1}^k $ is
	hosted at (virtual) cloud node $ v_{s+1} $, node $ v_{s+1} $'s outgoing link cannot be assigned for
	the $ p $-th path of flow $(k,s,v_s,v_{s+1})$ and node $ v_{s+1} $'s
	incoming link must be assigned for this path\footnote{Notice that every cloud node in the
		virtual network
		has only a single outgoing link and a single incoming link.}.
	{In our formulation, all nodes including the inactivated cloud nodes can be used for routing the traffic flow, i.e., they can serve as intermediate nodes in the associated paths, as implied by constraints \eqref{mediacons1}-\eqref{mediacons6}.}
	
	We next present the flow conservation constraints of the first function of each flow. For all $ k \in \mathcal{K} $, $ v_1  \in \mathcal{\barr{V}}$, $ p \in \mathcal{P} $, and $ i \in \mathcal{\barr{I}} $, similar to constraints \eqref{mediacons1}-\eqref{mediacons6}, we have
	\begin{equation*}
		\sum_{j: (j,i) \in \mathcal{\barr{L}}} r_{j,i}(k, 0, S(k), v_{1}, p) - \sum_{j: (i,j) \in \mathcal{\barr{L}}} r_{i,j}(k, 0,  S(k), v_{1}, p) 
	\end{equation*}
	\begin{numcases}{=}
		-r(k, 0, S(k),v_1, p),     & \text{if}~$i = S(k);$ \label{firstcons1} \\
		0,                                   & \text{if}~$i \neq S(k), ~v_{1};$  \label{firstcons2}   \\
		r(k, 0, S(k),v_1, p),   &  \text{if}~$i=v_{1};$  \label{firstcons3}
	\end{numcases}
	\begin{equation*}
		\sum_{j: (j,i) \in \mathcal{\barr{L}}} z_{j,i}(k, 0,  S(k), v_{1},p) - \sum_{j: (i,j) \in \mathcal{\barr{L}}} z_{i,j}(k, 0,  S(k), v_{1}, p)
	\end{equation*}
	\begin{numcases}{=}
		-x_{v_1,1}(k),     & \qquad \qquad\text{if}~$i = S(k);$\label{firstcons4}   \\
		0,                                   &\qquad \qquad \text{if}~$i \neq S(k), ~v_{1};$  \label{firstcons5}  \\
		x_{v_1,1}(k)   &\qquad \qquad  \text{if}~$i=v_{1}$.    \label{firstcons6}
	\end{numcases}

	Finally, we present flow conservation constraints of the last function of each flow. For all $ k \in \mathcal{K} $, $ v_{\ell_k}  \in \mathcal{\barr{V}}$, $ p \in \mathcal{P} $, and $ i \in \mathcal{\barr{I}} $, similar to constraints \eqref{mediacons1}-\eqref{mediacons6}, we have
		{
			\small 
			\begin{equation*}
				\sum_{j: (j,i) \in \mathcal{\barr{L}}} r_{j,i}(k, {\ell_k}, v_{\ell_k}, D(k),p) -\sum_{j: (i,j) \in \mathcal{\barr{L}}} r_{i,j}(k, {\ell_k}, v_{\ell_k}, D(k), p)
			\end{equation*}
			\begin{numcases}{=}
				-r(k, {\ell_k}, v_{\ell_k},D(k),p),     & \text{if}~$i = v_{\ell_k};$ \label{lastcons1} \\
				0,                                   & \text{if}~$i \neq v_{\ell_k}, ~D(k);$  \label{lastcons2}   \\
				r(k, {\ell_k}, v_{\ell_k},D(k),p),    &  \text{if}~$i=D(k);$  \label{lastcons3}
			\end{numcases}
			\begin{equation*}
				\sum_{j: (j,i) \in \mathcal{\barr{L}}} z_{j,i}(k, {\ell_k}, v_{\ell_k},  D(k),p) -\sum_{j: (i,j) \in \mathcal{\barr{L}}} z_{i,j}(k,  {\ell_k}, v_{\ell_k},  D(k), p)
			\end{equation*}
			\begin{numcases}{=}
				-x_{v_{\ell_k}, {\ell_k}}(k),     & \qquad\qquad\text{if}~$i = v_{\ell_k};$\label{lastcons4}   \\
				0,                                   & \qquad\qquad\text{if}~$i \neq v_{\ell_k}, ~D(k);$  \label{lastcons5}  \\
				x_{v_{\ell_k}, {\ell_k}}(k),    &  \qquad\qquad\text{if}~$i=D(k)$.    \label{lastcons6}
			\end{numcases}\vspace{-0.25cm}\\
		}
		{\bf\noindent$\bullet$ E2E Latency Constraints\vspace{0.1cm}}

	Next, we consider the delay constraints of each flow.
	Let $ \theta(k,s) $ be the variable denoting the communication delay due to the traffic flow from the cloud node hosting function $ f^k_s $ to the cloud node hosting function $ f^k_{s+1} $.
	Then, $ \theta(k,s) $ should be the largest one among the $ P $ paths, i.e.,
	\begin{align}
		  & \theta(k,s) \geq \sum_{v_s, v_{s+1}\in\mathcal{\barr{V}}}  \sum_{(i,j) \in \mathcal{L}}  d_{i,j}  z_{i,j}(k, s, v_s, v_{s+1}, p), \nonumber \\
		  & \qquad \qquad\forall~k \in \mathcal{K}, ~ \forall~s \in \mathcal{F}(k) \cup \{0\}, ~\forall ~p \in \mathcal{P}  \label{consdelay2funs}.
	\end{align}
	Hence the total communication delay on the links of flow $ k $, denoted as $ \theta_L(k) $, can be written as
	\begin{equation}
		\label{linkdelaycons}
		\theta_L(k) = \sum_{s \in \mathcal{F}(k)\cup \{0\}} \theta(k,s), ~\forall~ k \in  \mathcal{K}.
	\end{equation}
	Now for each flow $ k $, we consider the total NFV delay on the nodes, denoted as $ \theta_N(k) $.
	This can be written as
	\begin{equation}
		\label{nodedelaycons}
		\theta_N(k) = \sum_{s \in \mathcal{F}(k)} \sum_{v \in \mathcal{{V}}} d_{v,s}(k) \xhatzero_{v,s}(k),~\forall  ~k \in  \mathcal{K}.
	\end{equation}
	The E2E delay of flow $k$ is the sum of total communication delay $ \theta_L(k) $ and total NFV delay $ \theta_N(k) $ \cite{Luizelli2015,Woldeyohannes2018,Qu2019}.
	The following delay constraint ensures that flow $k$'s {E2E} delay is less than or equal to its threshold $\Theta_k$:
	\begin{equation}
		\label{delayconstraint}
		\theta_L(k)+\theta_N(k) \leq \Theta_k,~\forall~k \in  \mathcal{K}.
	\end{equation}
	
	\subsection{A New MBLP Formulation}
	\label{subsec:mipformulation}
	There are two objectives in our problem. 
	The first objective is to minimize the total power consumption of the whole cloud network.
	The power consumption of a cloud node is the combination
	of the dynamic load-dependent power consumption
	(that increases linearly with the load) and the static power
	consumption \cite{3gpp}.
	Hence, the first objective function can be written as:
	\begin{align}
		\sum_{v \in \mathcal{V}}\left[\beta_1y_v+{\Delta} \sum_{k \in \mathcal{K}}\sum_{s \in \mathcal{F}(k)} \lambda_s(k)x^0_{v,s}(k)\right] +\sum_{v \in \mathcal{V}}\beta_2(1-y_v)\label{objfuns}.
	\end{align}
	In the above, the parameters $\beta_1$ and $\beta_2$ are the power consumptions of each activated cloud node and inactivated cloud node, respectively, satisfying $\beta_1>\beta_2$; the parameter $ \Delta $ is the power consumption of processing one unit of data rate.
	From \eqref{onlyonenode}, the above objective function can be simplified as
	\begin{equation*}
		\label{objectivefuns}
		(\beta_1-\beta_2)\sum_{v \in \mathcal{V}}y_v +c, 
	\end{equation*}
	where $  c=  \beta_2|\mathcal{V}|+{\Delta}\sum_{k \in \mathcal{K}}\sum_{s \in \mathcal{F}(k)} \lambda_s(k) $ is a constant.
	Hence, minimizing the total power consumption is equivalent to minimizing the total number of activated cloud nodes.
	{We remark that the total number of activated cloud nodes also approximately reflects the reliability of the whole cloud network.
	Indeed, the latter is the product of the reliability of all activated cloud nodes \cite{Promwongsa2020}, and when the cloud nodes have the same reliability, minimizing the total number of activated cloud nodes is equivalent to maximizing the reliability of the whole cloud network.}

	The second objective is to minimize the total delay of all the services:
	\begin{equation}\label{delobj}
		\sum_{k \in \mathcal{K}} (\theta_L(k) + \theta_N(k)).
	\end{equation}
	{The second objective is important in the following sense.
	First, it will help to avoid cycles in each path connecting the two adjacent service functions in the solution of the problem.
	Second, the problem of minimizing the total number of activated cloud nodes often has multiple solutions and the total E2E delay term can be regarded as a regularizer to make the problem have a unique solution as observed in our simulation results.
	Third, for some delay critical tasks (e.g., maximizing the freshness of information \cite{Kaul2012} in the monitoring system), the total E2E delay is expected to be as small as possible.}

	The above two objectives
	can be combined into a single objective, using the traditional weighted sum method \cite{Marler2010}.
	Based on the above analysis, we present the problem formulation to minimize a weighted sum {of} the total power consumption of the whole cloud network (equivalent to the total number of activated cloud nodes in the physical network) and the total delay of all services:
	\begin{align}
		  & \min_{\boldsymbol{x},\boldsymbol{y},\boldsymbol{z},\boldsymbol{r},\boldsymbol{\theta}} &   & \sum_{v \in \mathcal{V}}y_v + \sigma \sum_{k \in \mathcal{K}} (\theta_L(k) + \theta_N(k)) \nonumber \\
		  & ~~~~~{\text{s.t.~}}                                                                    &   & (\ref{key})-(\ref{delayconstraint}), \label{mip}
		\tag{\rm{NS-I}}
	\end{align}
	where $ \sigma $ is a constant number that balances the importance of the two terms in the objective function.
	\textchange{Before discussing the analysis results of problem \eqref{mip}, we note that our proposed formulation \eqref{mip} can take the node/link availability \cite{Vassilaras2017} into consideration.
		Indeed, if a specific set of nodes/links is not available, due for instance to network issues (security, power, etc), we can set the associated computational/communication capacities to be zero and solve the corresponding problem again.
	}

	We now present some analysis results of problem \eqref{mip}. First, problem \eqref{mip} is an MBLP since the nonlinear terms of binary variables $  x_{v_s,s}(k)  x_{v_{s+1},{s+1}}(k)  $ in \eqref{relalambdaandx1}, \eqref{relazandx1}, \eqref{mediacons4}, and \eqref{mediacons6} can be equivalently linearized \cite{Conforti2014}.
	To be more precise, we can equivalently replace the term $  x_{v_s,s}(k)  x_{v_{s+1},{s+1}}(k)  $ by an auxiliary binary variable $ w_{v_s, v_{s+1},s}(k) $ and add the following constraints:
	\begin{align*}
		  & w_{v_s, v_{s+1},s}(k)  \leq x_{v_s,s}(k), \ w_{v_s, v_{s+1},s}(k)  \leq x_{v_{s+1},{s+1}}(k), \\
		  & w_{v_s, v_{s+1},s}(k)\geq  x_{v_s,s}(k) + x_{v_{s+1},{s+1}}(k)- 1.
	\end{align*}
	Note that the linearity of all variables in problem \eqref{mip} is vital, which allows to leverage the efficient integer programming solver such as Gurobi \cite{Gurobi} to solve the problem to global optimality.
	
	Second, we can show that problem \eqref{mip} is strongly NP-hard.
	\begin{proposition}
		\label{NPhard}
		Problem \eqref{mip} is strongly {{NP}}-hard. 
	\end{proposition}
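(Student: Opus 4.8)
The plan is to prove strong NP-hardness by a polynomial-time reduction from the \emph{3-PARTITION} problem, which is well known to be strongly NP-hard. Recall that an instance of 3-PARTITION consists of $3m$ positive integers $a_1,\ldots,a_{3m}$ and a bound $B$ satisfying $\sum_{i=1}^{3m}a_i=mB$ and $B/4<a_i<B/2$ for every $i$, and the task is to decide whether $\{1,\ldots,3m\}$ can be partitioned into $m$ triples, each indexing integers that sum to exactly $B$. The source of hardness I intend to exploit in problem \eqref{mip} is the packing structure created by the node capacity constraints \eqref{nodecapcons} together with the exactly-one-node requirement \eqref{onlyonenode}; the routing and latency machinery will be neutralized so that deciding feasibility reduces precisely to a valid triple partition.

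Given a 3-PARTITION instance, I would construct a network slicing instance as follows. Introduce $m$ cloud nodes $v_1,\ldots,v_m$, each with computational capacity $\mu_{v_j}=B$ and each able to process a single common function type $f$; note that since every SFC contains one function, $\ell_{\max}=1$ and each physical cloud node spawns exactly one virtual cloud node. For each integer $a_i$ create one flow $k_i$ whose SFC is the single function $f$ and whose service function rates satisfy $\lambda_0(k_i)=\lambda_1(k_i)=a_i$. Introduce a common non-cloud source $S$ and a common non-cloud destination $D$, set $S(k_i)=S$ and $D(k_i)=D$, and add links $(S,v_j)$ and $(v_j,D)$ for all $j$. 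To make routing and latency non-binding, I would set every link capacity to $+\infty$, every communication delay $d_{i,j}$ and NFV delay $d_{v,s}(k)$ to $0$, and every threshold $\Theta_{k_i}$ to any nonnegative value. The reduction is clearly computable in polynomial time, and because $B$ and the $a_i$ are polynomially bounded, it preserves strong hardness.

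The correctness argument I would give is an ``if and only if.'' Since the total demand $\sum_i a_i=mB$ equals the total capacity $\sum_j\mu_{v_j}$, every cloud node must be activated: if some $y_{v_j}=0$, then by \eqref{xyrelation} and \eqref{nodecapcons} node $v_j$ hosts nothing, leaving capacity $(m-1)B<mB$, which cannot accommodate all functions forced to be placed by \eqref{onlyonenode}. Hence each $v_j$ must be filled exactly to $B$, and because $B/4<a_i<B/2$ forces every fully loaded node to host the functions of \emph{exactly three} flows, any feasible placement corresponds to a partition of $\{a_1,\ldots,a_{3m}\}$ into $m$ triples summing to $B$. Conversely, given such a partition, assigning flow $k_i$ to its group's cloud node and routing it along $S\to v_j\to (v_j)_1\to v_j\to D$ satisfies all of \eqref{key}--\eqref{delayconstraint}, since the links are uncapacitated and delay-free. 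Thus problem \eqref{mip} admits a feasible solution (with optimal objective exactly $m$) if and only if the 3-PARTITION instance is a yes-instance; as determining feasibility is already NP-hard, so is the optimization problem.

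The hard part will be verifying that the routing and E2E-latency constraints do not inadvertently obstruct the reduction: I must confirm that, once the placement variables $x^0_{v,s}(k)$ are fixed consistently with a valid triple partition, one can always realize feasible path-selection variables $z$, flow variables $r$, and delay variables $\theta$ satisfying every constraint in \eqref{key}--\eqref{delayconstraint} within the virtual network. Choosing uncapacitated, delay-free links reduces this to a routine single-path flow-feasibility check along $S\to v_j\to (v_j)_1\to v_j\to D$, valid for any $P\ge 1$, so the remaining care is simply to set up the virtual-network topology so each flow can reach its assigned cloud node and its destination.
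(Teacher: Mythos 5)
Your reduction is correct, but it takes a genuinely different route from the paper. The paper's argument is a one-line specialization: it observes that problem \eqref{mip} contains the network slicing problem of \cite{Zhang2017} (which has no E2E latency constraints) as a special case, and inherits strong NP-hardness from that known result. You instead give a self-contained reduction from 3-PARTITION, trivializing the routing and latency machinery (uncapacitated, delay-free links; a single-function SFC per flow) so that feasibility collapses to the exact bin-packing structure induced by the node capacity constraints \eqref{nodecapcons} together with the exactly-one-node placement requirement \eqref{onlyonenode}; the condition $B/4<a_i<B/2$ then forces each of the $m$ fully loaded cloud nodes to host exactly three flows' functions, yielding the desired equivalence. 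Your approach buys transparency and independence from the external reference --- it pinpoints which structural ingredient of \eqref{mip} is responsible for the hardness --- at the cost of length; the paper's approach is shorter but delegates the actual combinatorial argument to \cite{Zhang2017}. Two small polish items for your version: for a clean \emph{strong} NP-hardness claim all instance data should be polynomially bounded, so set the link capacities to $mB$ rather than $+\infty$ (the conclusion is unaffected since total demand is $mB$); and you should state explicitly that every cloud node $v_j$ is allowed to process the common function $f$, i.e., no $x_{v_j,1}(k_i)$ is forced to zero, so that the only obstruction to placement is capacity.
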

	\noindent In fact, problem \eqref{mip} includes the problem in \cite{Zhang2017} as a special case, which does not consider the E2E latency constraints of all services. Since  the problem in \cite{Zhang2017} is strongly NP-hard, it follows that, problem \eqref{mip} is also strongly NP-hard.
	In addition, it is simple to check that both numbers of variables and constraints in problem \eqref{mip} are 
$ \mathcal{O}(|\mathcal{\bar{V}}|^2|\bar{\mathcal{L}}||\mathcal{P}|\sum_{k \in \mathcal{K}}\ell_k) $.
	The strong NP-hardness of problem \eqref{mip} and the huge number of variables and constraints in it make the above approach can only solve the problem associated with small size networks. 
	In the next section, we shall propose an equivalent formulation with a significantly smaller number of variables and constraints.

	{Third, in problem \eqref{mip}, if the power consumption term, or equivalently, the
	total number of activated cloud nodes term, in the objective function is more
	important than the total delay term (where the second term just serves as a
	regularizer), the problem reduces to the two-stage formulation, where the
	first stage minimizes the total power consumption and with the minimum power
	consumption the second stage minimizes the total delay of all services}.
	Using a similar argument as in  \cite[Proposition 1]{Guo2019}, we can show that
	solving problem \eqref{mip} with an appropriate parameter $\sigma$ is equivalent to
	solving the above two-stage formulation.
	\begin{proposition}
		\label{oneproblemcondition}
		Suppose $ \Theta_k > 0 $ for some $ k \in \mathcal{K}$. Then, problem \eqref{mip} with $\sigma\in (0 , 1/{{\sum_{k \in \mathcal{K}}\Theta_k }})$ is equivalent
		to the two-stage problem where the first stage minimizes the total power consumption and with the minimum power consumption the second stage minimizes the total \textchange{E2E} delay.
	\end{proposition}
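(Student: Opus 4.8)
The plan is to follow the lexicographic-weighting argument behind \cite[Proposition 1]{Guo2019}, exploiting two structural facts: the activated-node term $\sum_{v \in \mathcal{V}} y_v$ takes only integer values, while the total-delay term is uniformly bounded. Concretely, for any feasible point I write $N = \sum_{v \in \mathcal{V}} y_v$ and $T = \sum_{k \in \mathcal{K}} (\theta_L(k) + \theta_N(k))$, so the objective of \eqref{mip} is $N + \sigma T$. Summing the E2E latency constraints \eqref{delayconstraint} over $k$ gives $0 \le T \le \sum_{k \in \mathcal{K}} \Theta_k$, where nonnegativity follows from $d_{i,j}, d_{v,s}(k) \ge 0$ together with \eqref{consdelay2funs}, \eqref{linkdelaycons}, and \eqref{nodedelaycons}. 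The hypothesis $\Theta_k > 0$ for some $k$ guarantees $\sum_{k \in \mathcal{K}} \Theta_k > 0$, so the stated interval for $\sigma$ is nonempty and, crucially, $\sigma \sum_{k \in \mathcal{K}} \Theta_k < 1$.

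Next I would set up the two-stage problem explicitly over the common feasible region: let $N^\star$ be the minimum of $N$, and let $T^\star$ be the minimum of $T$ among feasible points attaining $N = N^\star$, so that its optimal set is the collection of feasible points with $(N, T) = (N^\star, T^\star)$. To prove equivalence I would show this set coincides with the optimal set of \eqref{mip}, via a two-case comparison against an arbitrary feasible point with values $(N, T)$. If $N = N^\star$, then $T \ge T^\star$ by definition of the second stage, so $N + \sigma T \ge N^\star + \sigma T^\star$ since $\sigma > 0$, with equality if and only if $T = T^\star$. If instead $N > N^\star$, then integrality forces $N \ge N^\star + 1$, and hence $N + \sigma T \ge N^\star + 1 > N^\star + \sigma \sum_{k \in \mathcal{K}} \Theta_k \ge N^\star + \sigma T^\star$, a strict inequality; here I use $T \ge 0$, $\sigma \sum_{k \in \mathcal{K}} \Theta_k < 1$, and $T^\star \le \sum_{k \in \mathcal{K}} \Theta_k$.

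From this comparison both inclusions follow at once. Any two-stage optimum attains the value $N^\star + \sigma T^\star$, which the comparison shows is the minimum of \eqref{mip}; conversely, the case analysis forces every optimum of \eqref{mip} to satisfy $N = N^\star$ (otherwise the inequality is strict) and then $T = T^\star$, i.e.\ it is a two-stage optimum. Hence the two problems share exactly the same optimal solution set, which is the claimed equivalence.

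The argument is essentially elementary once the two structural facts are secured, so there is no deep obstacle; the only point needing care is the boundedness claim $T \le \sum_{k \in \mathcal{K}} \Theta_k$ and how it interacts with the integrality gap of $N$. The entire mechanism hinges on $\sigma$ being small enough that $\sigma T$ can never reach $1$ and therefore can never offset a unit increase in the integer count $N$ — this is exactly what $\sigma < 1/\sum_{k \in \mathcal{K}} \Theta_k$ guarantees, and it is also precisely why the hypothesis $\Theta_k > 0$ for some $k$ is required, keeping the interval nonempty and the upper bound finite.
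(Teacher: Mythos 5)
Your proof is correct and is precisely the lexicographic-weighting argument the paper invokes by reference to \cite[Proposition 1]{Guo2019}: the integrality of $\sum_{v\in\mathcal{V}} y_v$ together with the bound $0 \le \sum_{k\in\mathcal{K}}(\theta_L(k)+\theta_N(k)) \le \sum_{k\in\mathcal{K}}\Theta_k$ from \eqref{delayconstraint} ensures that $\sigma < 1/\sum_{k\in\mathcal{K}}\Theta_k$ makes the delay term unable to offset a unit increase in the node count. Your write-up supplies the details the paper leaves to the cited reference, and the two-case comparison establishing that the optimal sets coincide is complete.
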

	
	Finally, it is worthwhile highlighting the connection and difference between our proposed formulation \eqref{mip} and that in \cite{Woldeyohannes2018}.
	If we set $P=1$ in \eqref{mip}, then our formulation reduces to that in \cite{Woldeyohannes2018}.
	In particular, the variables $ \{r_{i,j}(k,s,v_s,v_{s+1},p)\} $ in \eqref{linkcapcons} can be replaced by the right-hand sides of \eqref{relarandz1} and all constraints related to the variables $\{r(k,s,v_s,v_{s+1},p)\}$ (e.g., \eqref{relalambdaandx1}-\eqref{relalambdaandx3}, \eqref{relarandz1}, \eqref{mediacons1}-\eqref{mediacons3}, \eqref{firstcons1}-\eqref{firstcons3}, \eqref{lastcons1}-\eqref{lastcons3}) can be removed.
	Our proposed formulation with $P>1$ allows the traffic flows to transmit over (possibly) multiple paths and hence fully exploits the flexibility of traffic routing.
	
	\section{A Compact Problem Formulation}
	\label{compactformulation}
	
	In this section, we shall derive a compact problem formulation for the network slicing problem with a significantly smaller number of variables and constraints. We shall show that this new formulation is indeed equivalent to formulation \eqref{mip}.
	
	\begin{table}[h]
		\label{notation2}
		\caption{Summary of New Variables in Problem \eqref{newmip}.}
		\centering	
		\setlength{\tabcolsep}{4pt} 
		\renewcommand{\arraystretch}{1.2} 
		\begin{tabularx}{0.48\textwidth}{|c|X|}
			\hline  
			$ r(k,s, p) $      & data rate on the $p$-th path of flow $(k,s)$  that is used to route the traffic flow between the two (virtual) cloud nodes hosting functions $ f_s^k $ and $ f_{s+1}^k $, respectively \\
			\hline  
			$ z_{i,j}(k,s,p) $ & binary variable indicating whether or not link $ (i,j) $ is on the $p$-th path of flow $(k,s)$                                                                                         \\
			\hline  
			$ r_{i,j}(k,s,p) $ & data rate on link $ (i,j) $ which is used by the $ p $-th path of flow $(k,s)$                                                                                                         \\
			\hline 
		\end{tabularx}
	\end{table}
	\subsection{A New Problem Formulation}
	{\bf\noindent$\bullet$ Key New Notations and Related Constraints\vspace{0.1cm}}\\
	\indent In the new formulation, we use the same placement variables as in problem \eqref{mip}, i.e., $ x_{v,s}(k) $, $ x^0_{v,s}(k) $, and $ y_v $.
	Hence, we also enforce the same constraints \eqref{key}-\eqref{totalcapacitiescons} in the new formulation.
	In addition, the same delay variables $\theta(k,s)$, $\theta_N(k)$,  and $\theta_L(k)$ are also used.
	As a result, constraints \eqref{linkdelaycons}-\eqref{delayconstraint} are also enforced in the new formulation.
	However, for each flow $ k $, we use different variables to represent its traffic flows.
	Next, we shall discuss the new variables to represent the flows and the related constraints.

	{Recall that in the previous section, we use $(k,s,v_s,v_{s+1})$
	to denote the flow that is routed between source node $v_s$ and destination node $v_{s+1}$ hosting two adjacent functions $f_s^k$ and $f_{s+1}^k$, respectively.
	This makes it easier to present the flow conservation constraints \eqref{mediacons1}-\eqref{lastcons6} as the source and destination nodes (i.e., $v_s$ and $v_{s+1}$, respectively) of a traffic flow are explicitly specified.
	However, notation $(k,s,v_s,v_{s+1})$ can be simplified since the source and destination nodes of a traffic flow can be implicitly represented by variables $\{x_{v}(k,s)\}$.
	Indeed, we can use $(k,s)$ to denote the flow that is routed between the two (virtual) cloud nodes hosting two adjacent functions $ f_s^k $ and $ f_{s+1}^k $, respectively.
	Then, for cloud node $v$, $x_{v}(k,s)=1$ implies that $v$ is the source node and the destination node of flows $(k,s)$ and $(k,s-1)$, respectively.
	Below we shall present the related variables and constraints based on this new notation $(k,s)$.}

	Similar to formulation \eqref{mip}, we assume that there are at most $P$ paths that can be used to route flow $(k,s)$. Let $ r(k,s,p) $ be the data rate on the $ p $-th path of flow $ (k,s) $.
	Then, analogous to constraints \eqref{relalambdaandx1}-\eqref{relalambdaandx3} that enforce the total data rates between the two nodes hosting functions $ f_s^k $ and  $ f_{s+1}^k $ to be equal to $ \lambda_s(k) $, we have
	\begin{align}
		  & \sum_{p \in \mathcal{P}}  r(k, {s}, p) =  \lambda_{s}(k),   ~ \forall~ k \in \mathcal{K},~\forall~s\in \mathcal{F}(k)\cup \{0\} \label{relalambdaandx11}.                                          
	\end{align}

	Let $ z_{i,j}(k,s,p)=1 $ denote that link $ (i,j) $ is on the $ p $-th path of flow $ (k,s) $; otherwise $  z_{i,j}(k,s,p)=0 $. Similarly in constraints \eqref{loops-1}-\eqref{loops-2}, we need the following constraints to guarantee that at most one of the two links associated with (virtual) cloud node $ v $, i.e., $ (v,\textchange{N(v)}) $ and $ (\textchange{N(v)},v)$,  is used by the $ p $-th path of flow $ (k,s) $:
	\begin{align}
		\label{loops}
		  & z_{v,\textchange{N(v)}}(k,s,p) + z_{\textchange{N(v)},v}(k,s,p) \leq 1, \nonumber                                                   \\
		  & ~~ ~\forall~v \in \mathcal{\bar{V}},~\forall~k \in \mathcal{K}, ~\forall~s \in \mathcal{F}(k)\cup \{0\},~\forall~p \in \mathcal{P}.
	\end{align}
	Moreover, similar to constraint \eqref{consdelay2funs}, we have
	\begin{align}
		  & \theta(k,s) \geq \sum_{(i,j) \in \mathcal{\bar{L}}}  d_{i,j}  z_{i,j}(k, s, p),\nonumber                                                \\
		  & \qquad\qquad  \forall~k \in \mathcal{K}, ~ \forall~s \in \mathcal{F}(k) \cup \{0\},~\forall ~p \in \mathcal{P} \label{consdelay2funs1}.
	\end{align}

	If $ z_{i,j}(k,s,p)=1  $, let $ r_{i,j}(k,s,p) $ be the associated data rate. By definition, we have the following coupling constraints:
	\begin{align}
		  & r_{i,j}(k, s,  p ) \leq \lambda_{s}(k)  z_{i,j}(k, s,p ), \nonumber                                                                                        \\
		  & ~~\forall~(i,j) \in \bar{\mathcal{L}}, ~\forall~k \in \mathcal{K}, ~\forall~s \in \mathcal{F}(k)\cup \{0\},~\forall~p \in \mathcal{P}. \label{relarandz11}
	\end{align}
	The new constraint that enforces the total data rates on link $ (i,j) $ is upper bounded by capacity $ C_{i,j} $
	\begin{equation}
		\label{linkcapcons1}
		\sum_{k \in \mathcal{K}} \sum_{s\in \mathcal{F}(k) \cup \{0\}}\sum_{p \in \mathcal{P}}  r_{i,j}(k, s,p) \leq C_{i,j}, ~  \forall~(i,j) \in \mathcal{L} .
	\end{equation}

	{\bf\noindent$\bullet$ New SFC Constraints\vspace{0.1cm}}\\
	\indent Next, we present the flow conservation constraints in the new formulation to ensure that the functions of each flow are processed in the prespecified order as in \eqref{sequence}.
	We start with the flow conservation constraints of the intermediate functions of each flow.
	In particular, for each $ k \in \mathcal{K} $, $ s \in \mathcal{F}(k)\backslash\{\ell_k\} $, $ p \in \mathcal{P} $, and $ i \in \mathcal{\barr{I}} $, we have\vspace{0.1cm}
	{\small\begin{numcases}{\!\!\!\!\!\!\!\!\!\!}
			\sum_{j: (j,i) \in \mathcal{\barr{L}}} r_{j,i}(k, s, p) - \sum_{j: (i,j) \in \mathcal{\barr{L}}} r_{i,j}(k, s,  p)=0,     & \!\!\!\!\!\!\!\!\!\!\text{if}~$i   \in \bar{\mathcal{I}}\backslash \bar{\mathcal{V}} ;$ \label{mediacons11} \\
			r_{i,\textchange{N(i)}}(k, s,  p)=r(k,s,p)x_{i,s}(k),                                   & \!\!\!\!\!\!\!\!\!\!\text{if}~$ i \in \bar{\mathcal{V}}; $  \label{mediacons12}   \\
			r_{\textchange{N(i)},i}(k, s, p)=r(k, s, p) x_{i,s+1}(k),   &  \!\!\!\!\!\!\!\!\!\!\!\! \text{if}~$i \in \bar{\mathcal{V}};$  \label{mediacons13}
		\end{numcases}
		\vspace{0.1cm}
		\begin{numcases}{\!\!\!\!\!\!\!\!\!\!}
			\sum_{j: (j,i) \in \mathcal{\barr{L}}} z_{j,i}(k, s, p) - \sum_{j: (i,j) \in \mathcal{\barr{L}}} z_{i,j}(k, s,  p)=0,     & \!\!\!\!\!\!\!\!\!\!\text{if}~$i   \in \bar{\mathcal{I}}\backslash \bar{\mathcal{V}} ;$ \label{mediacons14} \\
			z_{i,\textchange{N(i)}}(k, s,  p)= x_{i,s}(k),                                   & \!\!\!\!\!\!\!\!\!\!\text{if}~$ i \in \bar{\mathcal{V}} ;$  \label{mediacons15}   \\
			z_{\textchange{N(i)},i}(k, s, p)= x_{i,s+1}(k),   &  \!\!\!\!\!\!\!\!\!\!\!\! \text{if}~$i \in \bar{\mathcal{V}}.$  \label{mediacons16}
		\end{numcases}}
	Constraint \eqref{mediacons14} enforces that for a (virtual) intermediate node that does not process the function, if one of its incoming links is assigned for the $ p $-th path of flow $(k,s)$, then one of its outgoing links must also be assigned for this path.
	Constraint \eqref{mediacons11} further implies that the data rates over the two links are the same.
	Recall that, in the virtual network, there are exactly two links related to each cloud node
$ i $, i.e., $ (i,\textchange{N(i)}) $ and $ (\textchange{N(i)},i) $ and if flow $ k $ goes into cloud node $ i $,
	exactly one service function in flow $ k $'s SFC must be processed by this node.
	This implies that flow $ (k,s) $ comes out of cloud node  $ i $  using link $ (i,\textchange{N(i)}) $
	if and only if function $ f_s^k $ is hosted at cloud node $ i $, i.e., $ x_{i,s}(k) =1  $.
	This is enforced by constraint \eqref{mediacons15}. 
	In addition, if $ x_{i,s}(k) =1  $, the above also requires that the incoming link of
	cloud node $ i $, i.e., $ (\textchange{N(i)},i) $, cannot be used by the $ p $-th path of flow
$ (k,s) $, which is enforced by constraints \eqref{loops} and \eqref{mediacons15}.
	Furthermore, we need constraint \eqref{mediacons12} to enforce that if
$ x_{i,s}(k) =1  $, the data rate over link $ (i,\textchange{N(i)}) $ is equal to $ r(k,s,p) $.
	Similarly, constraints \eqref{loops}, \eqref{mediacons13}, and \eqref{mediacons16}
	require that: if cloud node $i$ hosts function $ f_{s+1}^k $, i.e., $ x_{i,s+1}(k) =1  $,
	node $i$'s incoming link  must be assigned for the $ p $-th path of flow $(k,s)$ and
	its outgoing link cannot be assigned for this path; and the data rate over the
	incoming link is equal to $ r(k,s,p) $.
	It is worth remarking that, in contrast to constraints \eqref{mediacons4} and \eqref{mediacons6}, we cannot present constraints \eqref{mediacons15} and \eqref{mediacons16} as 
	\begin{align}
		z_{{N(i)},i}(k, s, p)- z_{i,{N(i)}}(k, s,  p)= -x_{i,s}(k), \tag{\rm{51'}}\label{tmp1}     \\
		z_{{N(i)},i}(k, s, p) - z_{i,{N(i)}}(k, s,  p) = x_{i,s+1}(k). \tag{\rm{52'}} \label{tmp2}
	\end{align}
	Indeed, cloud node $ i $ can potentially process functions $ f_s^k $ or $ f_{s+1}^k $. When cloud node $ i $ processes function $ f_s^k $, i.e., $ x_{i,s}(k) = 1 $, the left-hand side of constraint \eqref{tmp1} must be $ -1 $. However, by constraint \eqref{key} and $ x_{i,s}(k) = 1 $, we have $ x_{i,s+1}(k) = 0 $, which, together with constraint \eqref{tmp2}, further implies the left-hand side of constraint \eqref{tmp1} must be $ 0 $. This is a contradiction.

	We next present the flow conservation constraints of the first and last functions of each flow, which are slightly different from constraints \eqref{mediacons11}-\eqref{mediacons16} due to the fact that $ S(k),D(k) \notin \mathcal{\bar{V}}$. Specifically, for all $ k \in \mathcal{K} $, $ p \in \mathcal{P} $, and $ i \in \mathcal{\barr{I}} $, we have
	\begin{equation*}
		\sum_{j: (j,i) \in \mathcal{\barr{L}}} r_{j,i}(k, 0, p) - \sum_{j: (i,j) \in \mathcal{\barr{L}}} r_{i,j}(k, 0,  p)\qquad \qquad\qquad\qquad
	\end{equation*}
	\begin{numcases}{\!\!\!\!\!\!\!\!\!\!=}
		-r(k, 0, p),     & \text{if}~$i = S(k);$ \label{firstcons11} \\
		0,                                   & \text{if}~$ i \in \bar{\mathcal{I}}\backslash (\bar{\mathcal{V}}\cup \{S(k)\});$  \label{firstcons12}   \\
		r(k, 0, p) x_{i,1}(k),   &  \text{if}~$i \in \bar{\mathcal{V}};$  \label{firstcons13}
	\end{numcases}
	\begin{equation*}
		\sum_{j: (j,i) \in \mathcal{\barr{L}}} z_{j,i}(k, 0, p) - \sum_{j: (i,j) \in \mathcal{\barr{L}}} z_{i,j}(k, 0,  p)\qquad \qquad\qquad\qquad
	\end{equation*}
	\begin{numcases}{\!\!\!\!\!\!\!\!\!\!=}
		-1,     & \qquad\qquad\text{if}~$i = S(k);$\label{firstcons14}   \\
		0,                                   & \qquad\qquad\text{if}~$ i \in \bar{\mathcal{I}}\backslash (\bar{\mathcal{V}}\cup \{S(k)\}) ;$  \label{firstcons15}  \\
		x_{i,1}(k)   &  \qquad\qquad\text{if}~$i \in \bar{\mathcal{V}}$.    \label{firstcons16}
	\end{numcases}
	\noindent For all $ k \in \mathcal{K} $, $ p \in \mathcal{P} $, and $ i \in \mathcal{\barr{I}} $, we have
	\begin{equation*}
		\sum_{j: (j,i) \in \mathcal{\barr{L}}} r_{j,i}(k, {\ell_k}, p) -\sum_{j: (i,j) \in \mathcal{\barr{L}}} r_{i,j}(k, {\ell_k},  p)\qquad \qquad\qquad
	\end{equation*}
	\begin{numcases}{\!\!\!=}
		-r(k, {\ell_k},p)x_{i, {\ell_k}}(k),     & \text{if}~$ i \in \bar{\mathcal{V}};$ \label{lastcons11} \\
		0,                             & \text{if}~$ i \in    \bar{\mathcal{I}}\backslash(\bar{\mathcal{V}}\cup \{D(k)\});$  \label{lastcons12}   \\
		r(k, {\ell_k},p),   &  \text{if}~$i=D(k);$  \label{lastcons13}
	\end{numcases}
	\begin{equation*}
		\sum_{j: (j,i) \in \mathcal{\barr{L}}} z_{j,i}(k, {\ell_k}, p) -\sum_{j: (i,j) \in \mathcal{\barr{L}}} z_{i,j}(k,  {\ell_k},  p)\qquad \qquad\qquad
	\end{equation*}
	\begin{numcases}{\!\!\!=}
		-x_{i, {\ell_k}}(k),    \qquad \qquad   & \text{if}~$ i \in \bar{\mathcal{V}};$\label{lastcons14}   \\
		0,                              \qquad \qquad      & \text{if}~$i \in \bar{\mathcal{I}}\backslash(\bar{\mathcal{V}}\cup \{D(k)\});$  \label{lastcons15}  \\
		1,    \qquad \qquad\ &  \text{if}~$i=D(k)$.    \label{lastcons16}
	\end{numcases}\vspace{-0.1cm}\\
	{\bf\noindent$\bullet$ A New Compact Problem Formulation\vspace{0.1cm}}
	
	Now, we are ready to present the new formulation for the network slicing problem:
	\begin{align}
		  & \min_{\boldsymbol{x},\boldsymbol{y},\boldsymbol{z},\boldsymbol{r},\boldsymbol{\theta}} &   & \sum_{v \in \mathcal{V}}y_v + \sigma \sum_{k \in \mathcal{K}} (\theta_L(k) + \theta_N(k)) \nonumber                                   \\
		  & ~~~~~{\text{s.t.~}}                                                                    &   & \eqref{key}-\eqref{totalcapacitiescons}, ~\eqref{linkdelaycons}-\eqref{delayconstraint},~\eqref{relalambdaandx11}-\eqref{lastcons16}.
		\label{newmip}
		\tag{\rm{NS-II}}
	\end{align}
	Problem \eqref{newmip} is also an MBLP, since the nonlinear terms $ r(k, s,p) x_{i,s}(k) $, $ r(k, s,p) x_{i,s+1}(k) $, $  r(k, 0, p)x_{i,1}(k) $, and $ r(k, {\ell_k},p)x_{i, {\ell_k}}(k) $ in \eqref{mediacons12}, \eqref{mediacons13}, \eqref{firstcons13}, and  \eqref{lastcons11} can be equivalently linearized \cite{Glover1975}.
	Let us take $r(k, s,p) x_{i,s}(k) $ as an example. To linearize $r(k, s,p) x_{i,s}(k)$, we need to introduce an auxiliary variable $ \omega(k,s,p,i) :=  r(k, s,p) x_{i,s}(k) $. From \eqref{relalambdaandx11}, we know $ 0\leq r(k, s,p)  \leq \textchange{\lambda_s(k)} $, which, together with $ x_{i,s}(k) \in \{0,1\} $, implies that
	\begin{equation*}
		0 \leq \omega(k,s,p,i)  \leq r(k, s,p)  \leq \textchange{\lambda_s(k)}.
	\end{equation*}
	We then add the following constraints into the problem: 
	\begin{align*}
		  & \omega(k,s,p,i) \geq 0,                                                                         
		\\
		  & \omega(k,s,p,i) \geq \textchange{\lambda_s(k)}x_{i,s}(k) + r(k,s,p) -\textchange{\lambda_s(k)}, 
		\\            	
		  & \omega(k,s,p,i) \leq  \textchange{\lambda_s(k)}x_{i,s}(k) ,                                     
		\\     
		  & \omega(k,s,p,i) \leq  r(k,s,p).                                                                 
	\end{align*}
	The above four constraints ensure that if $ x_{i,s}(k) =1 $, $ \omega(k,s,p,i) = r(k,s,p) $; otherwise $ \omega(k,s,p,i) = 0  $.

	\subsection{Equivalence of Formulations \eqref{mip} and \eqref{newmip}}
	
	In this subsection, we will show, {somewhat surprising}, that formulations \eqref{mip} and \eqref{newmip} are equivalent, although they are derived in different ways and they take different forms.
	\textchange{The equivalence means that the two formulations share the same optimal solution of the network slicing problem.
		More specifically, for any given feasible point $(x,y,z,r,\theta)$  of formulation \eqref{newmip}, we can construct a feasible solution $(X,Y,Z,R,\Theta) $ of formulation \eqref{mip} such that the two formulations have the same objective value at the corresponding points and vice versa.}
	
	\begin{theorem}
		\label{equivalenceresult}
		Formulations \eqref{mip} and \eqref{newmip} are equivalent.
	\end{theorem}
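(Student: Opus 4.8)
The plan is to exploit the fact that the two formulations share the objective $\sum_{v\in\mathcal{V}} y_v + \sigma\sum_{k\in\mathcal{K}}(\theta_L(k)+\theta_N(k))$, the placement variables $x_{v,s}(k),x^0_{v,s}(k),y_v$, the delay variables $\theta(k,s),\theta_L(k),\theta_N(k)$, and the common constraints \eqref{key}--\eqref{totalcapacitiescons} and \eqref{linkdelaycons}--\eqref{delayconstraint}. Since the objective value is completely determined by $y$ and $\theta$, it suffices to establish a correspondence between feasible points of the two problems that leaves the placement and delay variables unchanged and only transforms the routing variables $\{r(\cdot),z_{i,j}(\cdot),r_{i,j}(\cdot)\}$. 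Equivalence of objective values then follows immediately, and I only need to check feasibility of the transformed routing variables under the remaining routing constraints.

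The key structural observation is that, by \eqref{onlyonenode} and \eqref{4}, for every flow $k$ and stage $s\in\mathcal{F}(k)\cup\{0\}$ there is a \emph{unique} (virtual) cloud node $v_s^*$ that hosts $f_s^k$ (with $v_0^*=S(k)$ and $v_{\ell_k+1}^*=D(k)$); consequently the product $x_{v_s,s}(k)x_{v_{s+1},s+1}(k)$ appearing throughout \eqref{mip} equals $1$ exactly when $(v_s,v_{s+1})=(v_s^*,v_{s+1}^*)$ and $0$ otherwise. This means the extra pair of indices $(v_s,v_{s+1})$ carried by the \eqref{mip} variables is redundant given the placement $x$. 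For the direction from \eqref{mip} to \eqref{newmip}, I would simply drop these redundant indices, i.e.\ set $r(k,s,p):=r(k,s,v_s^*,v_{s+1}^*,p)$, $z_{i,j}(k,s,p):=z_{i,j}(k,s,v_s^*,v_{s+1}^*,p)$, and $r_{i,j}(k,s,p):=r_{i,j}(k,s,v_s^*,v_{s+1}^*,p)$. Each compact constraint then follows from its expanded counterpart after substituting the product of placement variables by $1$: \eqref{relalambdaandx11} from \eqref{relalambdaandx1}--\eqref{relalambdaandx3}, \eqref{loops} from \eqref{loops-1}, \eqref{consdelay2funs1} from \eqref{consdelay2funs}, \eqref{relarandz11} from \eqref{relarandz1}, and \eqref{linkcapcons1} from \eqref{linkcapcons} (the sum over $v_s,v_{s+1}$ collapsing to the single active term). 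For the reverse direction I would lift a feasible point of \eqref{newmip} by setting the \eqref{mip} variables to the corresponding \eqref{newmip} value on the active pair $(v_s^*,v_{s+1}^*)$ and to zero on every other pair; on the inactive pairs the right-hand sides of \eqref{relalambdaandx1}--\eqref{relazandx3} and \eqref{mediacons4}--\eqref{mediacons6} all vanish because the placement product is zero, so those constraints hold trivially with zero variables.

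The step I expect to be the main obstacle is matching the flow-conservation constraints at the cloud nodes, since \eqref{mip} writes them in source/sink form indexed by the specified pair $(v_s,v_{s+1})$ (constraints \eqref{mediacons1}--\eqref{mediacons6}, \eqref{firstcons1}--\eqref{firstcons6}, \eqref{lastcons1}--\eqref{lastcons6}), whereas \eqref{newmip} writes them node-by-node through the single links $(i,N(i))$ and $(N(i),i)$ and the products $r(k,s,p)x_{i,s}(k)$ (constraints \eqref{mediacons11}--\eqref{mediacons16}, \eqref{firstcons11}--\eqref{firstcons16}, \eqref{lastcons11}--\eqref{lastcons16}). To reconcile the two forms I would use the fact that each virtual cloud node has exactly one incoming and one outgoing link (as noted in Section \ref{naturalformulation}), together with \eqref{key} and \eqref{loops}, to argue that at $i=v_s^*$ the \eqref{mip} conservation equation forces $r_{v_s^*,N(v_s^*)}=r(k,s,p)$ and $r_{N(v_s^*),v_s^*}=0$, which is precisely \eqref{mediacons12} with $x_{v_s^*,s}(k)=1$, and symmetrically at $i=v_{s+1}^*$ for \eqref{mediacons13}. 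The delicate point is that a single cloud node may simultaneously be the sink of stage $s-1$ and the source of stage $s$ (this is exactly what allows the compact formulation to host several functions of the same flow at one physical node); one must therefore verify the conservation equations \emph{stage by stage} in the index $s$, not per physical node, and invoke \eqref{key} to rule out the inconsistent situation flagged in the discussion of \eqref{tmp1}--\eqref{tmp2}. Once the cloud-node equations are matched, the conservation equations at the non-cloud nodes \eqref{mediacons11} and \eqref{mediacons14} coincide verbatim with \eqref{mediacons2} and \eqref{mediacons5} on the active pair, completing both directions and hence the equivalence.
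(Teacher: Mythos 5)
Your proposal is correct and follows essentially the same route as the paper: both directions are handled by the same lift/projection between the routing variables, with the compact variables equal to the expanded ones on the unique ``active'' pair $(v_s^*,v_{s+1}^*)$ and zero elsewhere, followed by constraint-by-constraint verification. The only cosmetic difference is that you define the projection by restriction to the active pair while the paper writes it as a sum over all pairs $(v_s,v_{s+1})$ (relations \eqref{relat1}--\eqref{relat3}); these coincide because the \eqref{mip} routing variables vanish on inactive pairs, exactly as you observe.
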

	\begin{proof}
		In this proof, we shall use $ (X,Y,Z,R,\Theta) $ and $(x,y,z,r,\theta)$ to denote the feasible points of formulations \eqref{mip} and \eqref{newmip}, respectively, in order to differentiate the feasible points of the two formulations.
		We shall show that there exists a one-to-one correspondence between the feasible points of the two formulations.
		More specifically, we shall show that, given any feasible point $(x,y,z,r,\theta)$  of formulation \eqref{newmip}, we can construct a feasible solution $(X,Y,Z,R,\Theta) $ of formulation \eqref{mip} \textchange{such that the two formulations have the same objective value at the corresponding solutions} and vice versa.

		We first show that, given a feasible solution $ (x,y,z,r,\theta) $ of formulation \eqref{newmip}, we shall construct a solution $ (X,Y,Z,R,\Theta) $ of formulation \eqref{mip} as follows: 
		\begin{itemize}
			\item 
			      $ X=x $, $ Y=y $, $ \Theta = \theta $;
			\item 
			      $R(k,s,v_s,v_{s+1}, p)=$
			      \begin{numcases}{\!\!\!\!\!\!\!\!\!\!\!}
				      r(k,s,p), & ~~if $ x_{v_s,s}(k) =  x_{v_{s+1},s+1}(k) = 1  $;\label{trans1}\\
				      0, & ~~otherwise;\label{trans2}
			      \end{numcases}
			\item 
			      $Z_{i,j}(k,s,v_s,v_{s+1}, p)=$
			      \begin{numcases}{\!\!\!\!\!\!\!\!\!}
				      z_{i,j}(k,s,p), & if $ x_{v_s,s}(k) =  x_{v_{s+1},s+1}(k) = 1;  $ \label{trans3}\\
				      0, & otherwise;\label{trans4}
			      \end{numcases}
			\item $ R_{i,j}(k,s,v_s,v_{s+1}, p)= $
			      \begin{numcases}{\!\!\!\!\!\!\!\!\!}
				      r_{i,j}(k,s,p), & if $ x_{v_s,s}(k) =  x_{v_{s+1},s+1}(k) = 1  $;\label{trans5}\\
				      0, & otherwise. \label{trans6}
			      \end{numcases}
		\end{itemize}
		We need to take additional care of the above constructions \eqref{trans1}-\eqref{trans6} when $ s=0 $ and $ s=\ell_k $. In particular, when $ s=0 $, we let $ v_s = S(k) $ and $ x_{v_s,s} = 1 $; when $ s=\ell_k $, we let $ v_{s+1} = D(k) $ and $ x_{v_{s+1},s+1} = 1 $. 
		
		Based on the above construction, we have the following key relationships, which relate the feasible points of the two problems: 
		\begin{align}
			r(k,s,p) =\sum_{v_s, v_{s+1} \in \bar{\mathcal{V}}} R(k,s,v_s,v_{s+1},p),\label{relat1} ~~          \\
			z_{i,j}(k,s,p) =\sum_{v_s, v_{s+1} \in \bar{\mathcal{V}}} Z_{i,j}(k,s,v_s,v_{s+1},p),\label{relat2} \\
			r_{i,j}(k,s,p)=\sum_{v_s, v_{s+1} \in \bar{\mathcal{V}}} R_{i,j}(k,s,v_s,v_{s+1},p). \label{relat3}
		\end{align}
		\textchange{Clearly, formulations \eqref{mip} and \eqref{newmip} have the same objective value at points $  (X,Y,Z,R,\Theta) $ and $ (x,y,z,r,\theta) $, respectively.}
		We \textchange{next} show that the constraints in formulation \eqref{mip} hold true at point $  (X,Y,Z,R,\Theta) $.
		Obviously, constraints \eqref{key}-\eqref{totalcapacitiescons} and  \eqref{linkdelaycons}-\eqref{delayconstraint} hold at this point.
		Combining \eqref{relat2} and \eqref{consdelay2funs1}, we know that constraint \eqref{consdelay2funs} also holds. The link capacity constraint \eqref{linkcapcons} follows from constraint \eqref{linkcapcons1} and Eq. \eqref{relat3}. 
		
		We now prove that constraints \eqref{relalambdaandx1}-\eqref{relarandz1} are satisfied at point $  (X,Y,Z,R,\Theta) $.
		Combining Eqs. \eqref{trans1}-\eqref{trans2} and \eqref{relalambdaandx11} shows that constraints \eqref{relalambdaandx1}-\eqref{relalambdaandx3} are satisfied.
		From Eqs. \eqref{trans3}-\eqref{trans4}, it follows that constraints \eqref{relazandx1}-\eqref{relazandx3} hold.
		By Eqs.  \eqref{trans3}-\eqref{trans4} and \eqref{loops}, we know that constraint \eqref{loops-1}-\eqref{loops-2} holds.
		Using Eqs. \eqref{trans3}-\eqref{trans6} and  \eqref{relarandz11}, we can obtain that constraint \eqref{relarandz1} is satisfied.

		Next we show that the flow conservation constraints \eqref{mediacons1}-\eqref{lastcons6} hold at $  (X,Y,Z,R,\Theta) $.
		We only need to consider the case where $ x_{v_s, s}(k) = x_{v_{s+1}, s+1}(k) =1 $ (or $  x_{v_1, 1}(k) = 1 $ and $ x_{v_{\ell_k},\ell_k}(k) =1$) since all the other cases are trivial to prove.
		Using the relations \eqref{trans1}-\eqref{trans6} and the flow conservation constraints \eqref{firstcons11}-\eqref{lastcons16}, it is simple to show that constraints \eqref{firstcons1}-\eqref{lastcons6} are satisfied.
		The proof of showing that point $  (X,Y,Z,R,\Theta) $ satisfies constraints \eqref{mediacons2} and \eqref{mediacons5} is similar. 
		It remains to show that constraints  \eqref{mediacons1}, \eqref{mediacons3}, \eqref{mediacons4}, and \eqref{mediacons6} hold.
		From \eqref{mediacons15} and \eqref{trans3}, we have 
		$$
			Z_{v_s,\textchange{N(v_s)}}(k, s, v_s, v_{s+1}, p) = z_{v_s,\textchange{N(v_s)}}(k, s, p)  =1.
		$$
		This, together with \eqref{loops} and \eqref{trans3}, implies 
		$$
			Z_{\textchange{N(v_s)},v_s}(k, s, v_s, v_{s+1}, p) =z_{\textchange{N(v_s)},v_s}(k, s, p)  =0.
		$$
		Consequently, constraint \eqref{mediacons4} holds.
		Similarly, we can show that constraint \eqref{mediacons6} holds true.
		Finally, it follows from \eqref{relarandz11}, \eqref{mediacons12}, \eqref{mediacons13}, \eqref{trans1}, and \eqref{trans5} that constraints \eqref{mediacons1} and \eqref{mediacons3} also hold true.
		
		We now prove the other direction. Given a feasible solution $  (X,Y,Z,R,\Theta) $ of formulation \eqref{mip}, we  construct a solution $ (x,y,z,r,\theta) $ by setting $ x= X $, $ y=Y $, $ \theta=\Theta $, and \eqref{relat1}-\eqref{relat3}. Using the previous arguments, we can show that $  (x,y,z,r,\theta)  $ satisfies all constraints in formulation \eqref{newmip}. 
	\end{proof}
	
	Theorem \ref{equivalenceresult} shows that there is a one-to-one correspondence
	between the feasible points of formulations \eqref{mip} and \eqref{newmip},
	and all information of solving the higher-dimensional formulation \eqref{mip} can be
	obtained by solving a more compact lower-dimensional formulation \eqref{newmip}. 
	To be specific, both of the numbers of  variables and constraints in formulation \eqref{newmip} are 
$ \mathcal{O}(|\mathcal{\bar{L}}||\mathcal{P}|\sum_{k \in \mathcal{K}}\ell_k)$.
	This is significantly less than those in formulation \eqref{mip}, which are
${\cal O}({|\mathcal{\bar{V}}|}^2|\mathcal{\bar{L}}||\mathcal{P}|\sum_{k \in \mathcal{K}}\ell_k )$.
	Therefore, formulation \eqref{newmip} can be much easier to solve than formulation \eqref{mip}, as demonstrated in the next section.
	{In addition, as shown in Proposition \ref{NPhard}, problem \eqref{mip} (and thus problem \eqref{newmip}) is strongly NP-hard, meaning that low-complexity algorithms for approximately solving the network slicing problem is needed in practice, especially when the problem's dimension is large.
	We remark that the compact formulation \eqref{newmip} is an important step towards developing an efficient low-complexity algorithm for solving the network slicing problem due to the following two reasons.
	First, globally solving the problem provides an important benchmark for evaluating the solution quality of the designed low-complexity (heuristic online) algorithm.
	Second, compact formulation \eqref{newmip} will often lead to a more compact linear programming (LP) relaxation (as compared with the LP relaxation of \eqref{mip}), which is actually the basis of many low-complexity LP relaxation based algorithms.
	In the future, we will develop low-complexity algorithms based on the LP relaxation of formulation \eqref{newmip} to extend the work in this paper.
	}
	
	\section{Numerical Results}
	\label{subsec:experiments}
	In this section, we present numerical results to compare the performance of our proposed problem formulations \eqref{mip} and \eqref{newmip}, and the existing problem formulations in \cite{Zhang2017} and \cite{Woldeyohannes2018}.
	In particular, we first perform numerical experiments to compare the computational efficiency of formulations \eqref{mip} and \eqref{newmip}.
	Then, we present some simulation results to demonstrate the effectiveness of our proposed formulation \eqref{newmip} over the state-of-the-art formulations in \cite{Zhang2017} and \cite{Woldeyohannes2018}.
	{Finally, we evaluate the performance of our proposed formulation \eqref{newmip} under different network parameters.}
	
	In both formulations \eqref{mip} and \eqref{newmip}, unless otherwise stated, we choose $\sigma = 0.001$ (which satisfies condition in Proposition \ref{oneproblemcondition}) and $P=2$.
	All MBLP problems are solved using Gurobi 9.0.1\footnote{{The Matlab codes of the two formulations can be downloaded at \url{https://github.com/chenweikun/networkslicing}.}} \cite{Gurobi}. We set a time limit of 600 seconds for  Gurobi, i.e., we terminate the solution process if the CPU time is over 600 seconds.
	\textchange{The relative gap tolerance of Gurobi is set to be 0.1\%, i.e., a feasible solution which has an optimality gap of 0.1\% is considered to be optimal.}
	All experiments were performed on a server with 2 Intel Xeon E5-2630 processors and 98 GB of RAM, using the Ubuntu GNU/Linux Server 14.04 x86\_64 operating system.

	\subsection{Comparison of Formulations \eqref{mip} and \eqref{newmip}}
	\label{compformsec}
	In this subsection, we compare \textchange{the computational efficiency} of solving the two equivalent
	formulations \eqref{mip} and \eqref{newmip} on some small randomly generated
	networks. The randomly generated procedure is described as follows. We randomly
	generate a network consisting of $ 6 $ nodes on a $ 100\times100 $ region in the
	Euclidean plane including $ 3 $ cloud nodes.
	We generate link $ (i,j) $ for each pair of nodes $ i $ and $ j $ with the probability of $0.6$.
	The communication delay on link $ (i,j) $ is calculated by the distance of link $ (i,j) $ over $ \bar{d} $, where $ \bar{d} $ is the average length of all shortest paths between every pair of nodes.
	The cloud node and link capacities are randomly chosen in $ [6,12] $ and $ [0.5,3.5] $, respectively.
	There are in total $ 5 $ different service functions, i.e., $ \{f^1, \ldots, f^5\} $.
	Among the $3$ cloud nodes, $2$ cloud nodes are randomly chosen to process $ 2 $ service functions of $ \{f^1, \ldots, f^5\} $ and the remaining one is able to process all the service functions.
	The processing delay of each function in each cloud node is randomly chosen in $ [0.8,1.2] $.
	For each service $ k $, nodes $ S(k) $ and $ D(k) $ are randomly chosen from the available network nodes excluding the cloud nodes; SFC $ \mathcal{F}(k) $ is an ordered sequence of functions randomly chosen from $ \{f^1, \ldots, f^5\} $ with $ |\mathcal{F}(k)|=3 $; the service function rates $ \lambda_s(k) $ are all set to $ 1 $; and the E2E delay threshold $ \Theta_k $ is set to $ 3+(6*\text{dist}_k+\alpha) $ where $ \text{dist}_k $ is the shortest path (in terms of the delay) between nodes $ S(k) $ and $ D(k) $ and $ \alpha $ is randomly chosen in $[0,2]$.
	In our simulations, we randomly generate 100 problem instances for each fixed number of
	services and the results presented below are based on statistics from all these 100 problem instances.
	
	\begin{figure}[h]
		\centering
		\includegraphics[height=\figuresize]{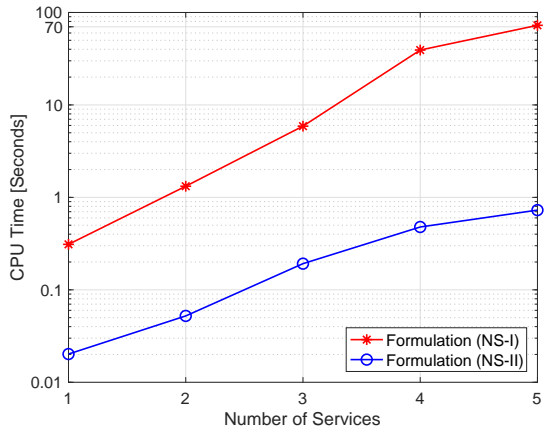}
		\caption{{The CPU time taken by solving formulations \eqref{mip} and \eqref{newmip}.}}
		\label{diffformulaitons}
	\end{figure}
	{Fig. \ref{diffformulaitons} plots the average CPU time taken by solving formulations \eqref{mip} and \eqref{newmip} versus the number of services.
	From Fig. \ref{diffformulaitons}, it can be clearly seen that it is much more efficient to solve formulation \eqref{newmip} than formulation \eqref{mip}.
	In particular, when the number of services is equal to 5, the CPU time of solving formulation \eqref{mip} is more than 70 seconds while that of solving formulation \eqref{newmip} is less than 1 second.
	From this simulation result, we can conclude that formulation \eqref{newmip} significantly outperforms formulation \eqref{mip} in terms of the solution efficiency.
	Due to this, we shall only use and discuss formulation \eqref{newmip} in the following.}

	\subsection{Comparison of Proposed Formulation \eqref{newmip} and Those in \cite{Zhang2017} and \cite{Woldeyohannes2018}}
	\label{compexistingforms}
	
	In this subsection, we present simulation results to illustrate the effectiveness of our proposed formulation compared with those in \cite{Zhang2017} and \cite{Woldeyohannes2018}.
	
	We consider the {\emph{fish network topology}} studied in \cite{Zhang2017}, consisting of 112 nodes and 440 links.
	The network includes 86 nodes that can be potentially chosen as the source node of the flows and only a single node that can be chosen as the destination node of the flows; see \cite{Zhang2017} for more details.  
	There are six cloud nodes that can potentially process service functions: five of them are randomly chosen to process two service functions of $ \{f^1, \ldots, f^4\} $ and the remaining one is chosen to process all the service functions.
	The cloud nodes' capacities are randomly chosen in $ [50,100] $.
	The links' capacities are randomly chosen in $ [7,77]  $.
	The NFV	 and communication delays on the cloud nodes and links are randomly chosen in $\{3,4,5,6\}$ and $\{1,2\}$, respectively.
	For each service $k$, node $S(k)$ is randomly chosen from the 86 available nodes and node $D(k)$ is set to be the common destination node; SFC $ \mathcal{F}(k) $ is an ordered sequence of functions randomly chosen from $ \{f^1, \ldots, f^4\} $ with $ |\mathcal{F}(k)|=3 $; the service function rate $ \lambda_s(k) $ are all set to be the same integer value, randomly chosen from $ [1,11] $; the E2E delay threshold $ \Theta_k $ is set to $ 20+(3*\text{dist}_k+\alpha) $ where $ \text{dist}_k $ is the shortest path (in terms of the delay) between nodes $ S(k) $ and $ D(k) $ and $ \alpha $ is randomly chosen in $[0,5]$. The above parameters are carefully chosen such that the constraints in the network slicing problem are neither too tight nor too loose. For each fixed number of services, we randomly generate 100 problem instances and the results presented below are based on statistics from all these 100 problem instances.
	
	\begin{figure}[h]
		\centering
		\includegraphics[height=\figuresize]{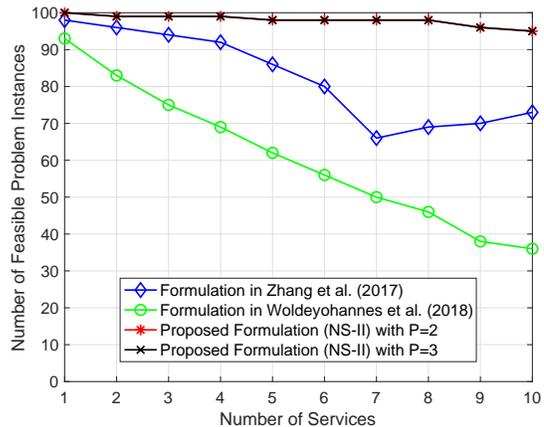}
		\caption{{Number of feasible problem instances by solving the formulations in \cite{Zhang2017}, \cite{Woldeyohannes2018} and our proposed formulation \eqref{newmip}.}}
		\label{feascases}
	\end{figure}
	
	Fig. \ref{feascases} plots the number of feasible problem instances by solving the formulations in \cite{Zhang2017}, \cite{Woldeyohannes2018} and our proposed formulation \eqref{newmip} with $P=2$ and $P=3$, where $ P $ is the maximum number of paths allowed to route the traffic flow between any pair of cloud nodes that process two adjacent functions of a service.
	Since the formulation in \cite{Zhang2017} does not explicitly take the E2E latency constraints into consideration, the blue-diamond curve in Fig. \ref{feascases} is obtained as follows.
	We solve the formulation in \cite{Zhang2017}
	and then substitute the obtained solution into the E2E latency constraints in \eqref{delayconstraint}: if the solution satisfies all E2E latency constraints, we count the corresponding problem instance feasible; otherwise it is infeasible.
	{As for the curves of our proposed formulation \eqref{newmip} with $P=2$ and $P=3$ and the formulation in \cite{Woldeyohannes2018}, since the E2E latency constraints \eqref{delayconstraint} are explicitly enforced, we solve the formulation by Gurobi and count the corresponding problem instance feasible if Gurobi can return a feasible solution; otherwise it is infeasible.
	Notice that due to the random generation procedure of the problem instances, we cannot guarantee that every problem instance is feasible.}
	
	We first compare the performance of formulation \eqref{newmip} with $P=2$ and $P=3$.
	Intuitively, the number of feasible problem instances by solving \eqref{newmip} with $P=3$ should be larger than that of using \eqref{newmip} with $P=2$ as there is more traffic routing flexibility in the formulation with $P=3$.
	However, as can be seen from Fig. \ref{feascases}, solving formulation \eqref{newmip} with $P=2$ and $P=3$ gives the same number of feasible problem instances.
	This sheds a useful insight that there is already a sufficiently large flexibility of traffic routing in formulation \eqref{newmip} with $P=2$.
	Therefore, in practice we can simply set $P=2$ in formulation \eqref{newmip}. In
	some cases where formulation \eqref{newmip} with $P=2$ does not have a
	satisfactory performance (e.g., the total energy consumption or the E2E latency of the
	service), we can increase $P$ to potentially increase the flexibility of traffic routing
	with a sacrifice of the solution efficiency.  

	Next, we compare our problem formulation \eqref{newmip} with the formulations in \cite{Zhang2017} and \cite{Woldeyohannes2018}.
	First, from Fig. \ref{feascases}, the flexibility of traffic routing in our proposed
	formulation \eqref{newmip} allows for solving a much larger number of problem
	instances than that can be solved by using the formulation in
	\cite{Woldeyohannes2018} (which can be seen as a special case of our formulation
	\eqref{mip}, or equivalently, formulation \eqref{newmip}, with $P=1$, as discussed at the end of
	Section \ref{naturalformulation}), especially in the case where the number of services
	is large.
	For instance,  when the number of services is $10$, $95 $ problem instances
	(from a total of $100$) are feasible by solving our formulation \eqref{newmip},
	while only 36 problem instances are feasible by solving the formulation in \cite{Woldeyohannes2018}.
	Second, it can be seen from Fig. \ref{feascases} that the number of feasible problem instances of solving our proposed formulation \eqref{newmip} is also  larger than that of solving the formulation in \cite{Zhang2017}.
	This clearly shows the advantage of our proposed formulation (i.e., it has a guaranteed E2E Latency) over that in \cite{Zhang2017}.
	In summary, the results in Fig. \ref{feascases} illustrate that, compared with those in \cite{Zhang2017} and \cite{Woldeyohannes2018}, our proposed formulation gives a ``better'' solution.
	More specifically, compared with that in \cite{Zhang2017}, our formulation has a \emph{guaranteed} E2E delay; compared with that in \cite{Woldeyohannes2018}, our formulation allows for \emph{flexible} traffic routing.
	
	\subsection{Evaluation of Proposed Formulation \eqref{newmip}}

	To gain more insight into the performance of formulation \eqref{newmip}, we carry out
	more numerical experiments on problem instances with different link capacities.
	More specifically, we generate another set of problem instances using the same randomly generated procedure, as in Section \ref{compexistingforms}, except that the links' capacity is randomly chosen in $[5,55]$.
	We refer this set as ``Low Link Capacity'' and compare it with the set in
	Section \ref{compexistingforms}, which is referred as ``High Link Capacity''.

	\begin{figure}[h]
		\centering
		\includegraphics[height=\figuresize]{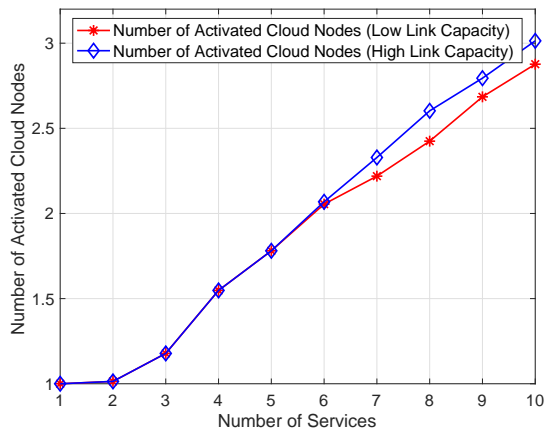}
		\caption{{The number of activated cloud nodes in formulation \eqref{newmip}}.}
		\label{activatednodesres1}
	\end{figure}
	{\bf\noindent$\bullet$ Number of Activated Cloud Nodes\vspace{0.15cm}}\\
	\indent Fig. \ref{activatednodesres1} shows the average number of activated cloud nodes in the physical network.
	We can observe that, as expected, for both sets, more cloud nodes need to be activated as the number of services increases.
	In addition, when the number of services is small (e.g., $|\mathcal{K}| \leq 6$), the numbers of activated cloud nodes in the two sets are almost the same.
	However, when the number of services is large (e.g., $|\mathcal{K}| \geq 7$), more cloud nodes need to be activated in the problem set with low link capacity, compared to that with high link capacity.
	This can be explained as follows. As the number of the services increases, the traffic in the network  becomes heavier.
	The latter further results in the situation that some activated cloud node cannot
	process the functions in some services due to the reason that some links'
	capacities are not enough to route the data flow.
	Therefore, more cloud nodes generally need to be activated in the problem set with low link capacity{, which leads to larger power consumption. This highlights an interesting tradeoff between the communication capacity and the power consumption of the underlying cloud network.} \vspace{0.1cm}\\
	\begin{figure}[t]
		\centering
		\includegraphics[height=\figuresize]{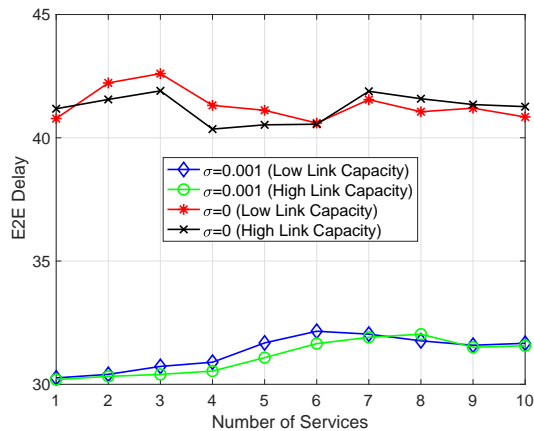}
		\caption{{The average E2E delay in problem \eqref{newmip}.}}
		\label{delaysres1}
	\end{figure}
	{\bf\noindent$\bullet$ E2E Delay\vspace{0.1cm}}\\
	\indent We now consider the E2E delays of the services. Fig. \ref{delaysres1} plots the average E2E delay by solving formulation \eqref{newmip} with $\sigma = 0.001$ and $\sigma= 0$.
	From the figure, we observe that for both ``High Link Capacity'' and ``Low Link Capacity'' cases, the average E2E delay by solving formulation \eqref{newmip} with $\sigma = 0.001$ is much smaller than that with $\sigma =0$.
	This clearly shows the advantage of adding the total delay term \eqref{delobj} into the objective in formulation \eqref{newmip}.
	In addition, for formulation \eqref{newmip} with $\sigma=0.001$, comparing the E2E delays of the two cases when the numbers of activated cloud nodes are the same (i.e., the number of services is less than or equal to $ 6$; see Fig. \ref{activatednodesres1}), the average E2E delay in ``High Link Capacity'' is slightly smaller than that in ``Low Link Capacity''. This is because with larger links' capacities, a service can potentially choose a path to transmit its data flow with a smaller E2E delay. \vspace{0.15cm}\\
	\begin{figure}[h]
		\centering
		\includegraphics[height=\figuresize]{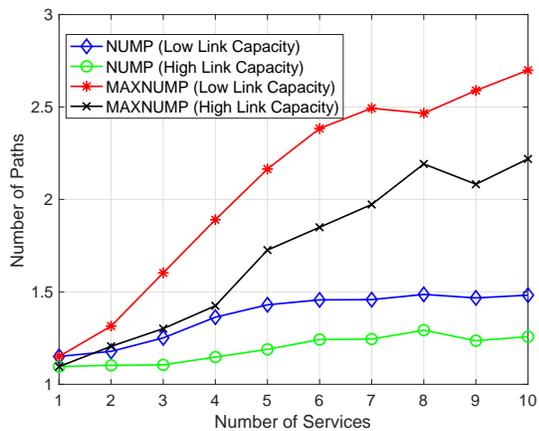}
		\caption{{The number of used paths in problem \eqref{newmip}.}}
		\label{pathsres1}
	\end{figure}
	\begin{figure}[h]
		\centering
		\includegraphics[height=\figuresize]{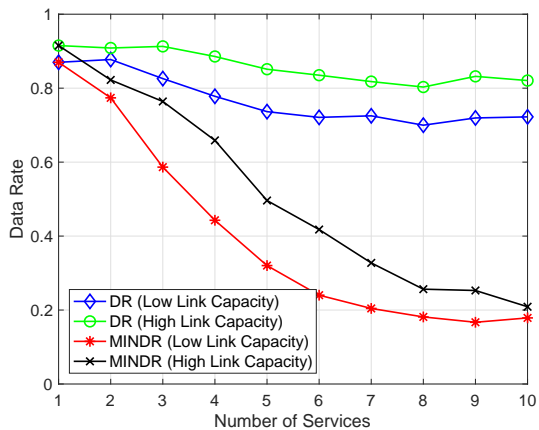}
		\caption{{The data rate on the paths in problem (NS-II).}}
		\label{datarate1}
	\end{figure}
	
	{\bf\noindent$\bullet$ Flexibility of Traffic Routing and the Number of Used Paths \vspace{0.15cm}}\\
	\indent
	{Finally, we consider the flexibility of traffic routing by comparing the number of paths of the services used to route the traffic from their source nodes to destination nodes\footnote{{Notice that setting $P=2$ in our formulation means that the number of paths is at most $2$ between the nodes who provide two adjacent functions of each service. The number of paths of each service used to route the traffic from its source node to its destination node lies in the interval $[1,\ell_k+1]$.}}
		and the data rates on the corresponding paths.
		Specifically, after solving each problem instance, we calculate the minimum number
		of paths \cite{Hartman2012} and the corresponding minimum data rate on these paths, denoted by {NUMP} and {DR}, needed to realize the routing strategy of the traffic flow of each service.
		For each problem instance, let MAXNUMP and MINDR denote the maximum NUMP and the minimum DR among all the services, respectively.

		Figs. \ref{pathsres1} and \ref{datarate1} plot the results of average NUMP and MAXNUMP and average DR and MINDR, respectively.
		In general, as the number of services increases, both NUMP and MAXNUMP become larger,
		which shows that more paths are used to carry out the traffic flow of the services and, as a result, their data rates (DR or MINDR) on the paths are likely to become smaller, as observed in Fig. \ref{datarate1}.
		In addition, Fig. \ref{pathsres1} clearly shows that, compared to the problem instances with low link capacity, the problem instances with high link capacity need fewer paths to route their traffic flow, and consequently, their data rates on the paths are likely to become larger, as observed in Fig. \ref{datarate1}.
		This shows that the heavier the traffic is or the smaller the link capacity is, the more flexibility of traffic routing generally is exploited and used in our proposed formulation \eqref{newmip}.}

	\section{\textchange{Conclusions and future work}}
	\label{conclusions}
	In this paper, we have investigated the network slicing problem that plays a crucial role in the 5G and B5G networks.
	We have proposed two new MBLP formulations for the network slicing problem, which can be optimally solved by {standard solvers} like Gurobi.
	Our proposed formulations minimize a weighted sum of the total power consumption
	of the whole cloud network (equivalent to the number of activated cloud nodes) and the
	total delay of all services {subject to the SFC constraints, the E2E latency
			constraints of all services, and all cloud nodes' and links' capacity constraints.}
	While we show that our proposed two formulations are mathematically equivalent, the second formulation, when compared to the first one, has a significantly smaller number of variables and constraints, which makes it much more efficiently solvable.
	Numerical results demonstrate the advantage of our proposed formulations over the existing ones in \cite{Zhang2017} and \cite{Woldeyohannes2018}.
	In particular, in addition to a guaranteed E2E latency of all services, our
	proposed formulations (even with $P=2$) offer a large degree of freedom of flexibly
selecting paths to route the traffic flow of all services from their source nodes to
destination nodes and thus can effectively alleviate the effects of the limited link
capacity on the performance of the whole network. 
{In addition, our analysis shows an interesting tradeoff between the communication capacities of the links and the power consumption of the underlying cloud network.}

As the future work, we shall develop low-complexity {LP relaxation based} algorithms for efficiently solving the network slicing  problem and possibly analyze the approximation performance of the algorithms.
{Tight LP relaxations and effective rounding techniques (which round the fractional solutions into desired binary solutions) are crucial to the performance of low-complexity LP relaxation based  algorithms. To develop tight LP relaxations and effective rounding techniques, we need to judiciously exploit the special structure of the problem.}
Some recent progress along this direction has been reported in \cite{Chen2021}. 
{In addition, it is interesting to consider other practical factors such as privacy \cite{Sun2019,Reyhanian2020} and the more practical nonlinear NFV delay model \cite{Baumgartner2015} in the network slicing problem formulation.}

\end{document}